\title{Beating Bellman's Algorithm for Subset Sum}
\author{Karl Bringmann\thanks{Saarland University and Max Plank Institute for Informatics. This work is part of the project TIPEA that has received funding from the European Research Council (ERC) under the European Unions Horizon 2020 research and innovation programme (grant agreement No. 850979).} \and Nick Fischer\thanks{INSAIT, Sofia University ``St. Kliment Ohridski''. This research was partially funded from the Ministry of Education and Science of Bulgaria (support for INSAIT, part of the Bulgarian National Roadmap for Research Infrastructure). Parts of this work were done while the author was at Weizmann Institute of Science.} \and Vasileios Nakos\thanks{National and Kapodistrian University of Athens and Archimedes / Athena RC.}}
\date{}
\begin{document}

\maketitle

\begin{abstract}
\noindent
Bellman's algorithm for Subset Sum is one of the earliest and simplest examples of dynamic programming, dating back to 1957. For a given set of $n$ integers $X$ and a target $t$, it computes the set of \emph{subset sums} $\mathcal S(X, t)$ (i.e., the set of integers $s \in [0\ldots t]$ for which there is a subset of~$X$ summing to $s$) in time $\Order(|\mathcal S(X, t)| \cdot n)$. Since then, it has been an important question whether Bellman's seminal algorithm can be improved.

This question is addressed in many recent works. And yet, while some algorithms improve upon Bellman's algorithm in specific parameter regimes, such as Bringmann's~\smash{$\widetilde O(t + n)$}-time algorithm [SODA~'17] and Bringmann and Nakos'~\smash{$\widetilde O(|\mathcal S(X, t)|^{4/3})$}-time algorithm [STOC~'20], none of the known algorithms beats Bellman's algorithm in all regimes. In particular, it remained open whether Subset Sum is in time~\smash{$\widetilde O(|\SSS(X, t)| \cdot n^{1-\epsilon})$} (for some $\epsilon > 0$).

In this work we positively resolve this question and design an algorithm that outperforms Bellman's algorithm in all regimes. Our algorithm runs in time $\widetilde O(|\mathcal S(X, t)| \cdot \sqrt{n})$, thus improving the time complexity by a factor of nearly $\sqrt n$. Our key innovation is the use of a result from additive combinatorics, which has not been applied in an algorithmic context before and which we believe to be of further independent interest for algorithm design. To demonstrate the broader applicability of our approach, we extend our ideas to a variant of Subset Sum on vectors as well as to Unbounded Subset Sum.
\end{abstract}

\setcounter{page}{0}
\thispagestyle{empty}
\clearpage

\section{Introduction} \label{sec:intro}
In this paper we study the seminal \emph{Subset Sum} problem defined as follows. Consider a set of~$n$ \emph{items}~\smash{$X \subseteq \Int_{\geq 0}$}. We call an integer $s$ a \emph{subset sum} of $X$ if there is a subset of~$X$ that sums to $s$, and we denote the set of all subset sums by $\SSS(X)$. The \emph{Subset Sum} problem is, given~$X$ and a \emph{target}~\makebox{$t \in \Int_{\geq 0}$}, to decide whether $t$ is a subset sum of $X$ (i.e., whether $t \in \SSS(X)$). Subset Sum is arguably the most fundamental NP-hard problem at the intersection of computer science, mathematical optimization, and operations research, and it is among the \emph{simplest} NP-hard problems, being a critical special case of the famous Knapsack and Integer Programming problems. Subset Sum also plays an important role in cryptography (with close connections to lattice-based cryptography) and from a purely mathematical viewpoint (specifically, in additive combinatorics it dates back to 1938 as the so-called \emph{Littlewood-Offord} problem).

\begin{algorithm}[t]
\caption{Bellman's dynamic programming algorithm to compute the set of subset sums $\SSS(X, t)$} \label{alg:bellman}
\begin{algorithmic}[1]
\Procedure{Bellman}{$X, t$}
    \State $S \gets \set{0}$
    \ForEach{$x \in X$}
        \State $S \gets \set{s, s + x : s \in S} \cap [0 \ldots t]$
    \EndForEach
    \State\Return $S$
\EndProcedure
\end{algorithmic}
\end{algorithm}

The rich algorithmic landscape of Subset Sum can be divided into two lines of research, based on two fundamental algorithms that are frequently taught in basic computer science classes. In one line, the textbook meet-in-the-middle algorithm~\cite{HorowitzS74} solves Subset Sum with an exponential time complexity of $\Order(2^{n/2} \poly(n))$. Despite extensive efforts~\cite{SchroeppelS81,Howgrave-GrahamJ10,DinurDKS12,AustrinKKM13,AustrinKKN15,AustrinKKN16,BansalGNV18,NederlofW21}, it remains unanswered whether this running time can be improved. Central to this paper is the other line, based on the classic \emph{dynamic programming} algorithm running in pseudo-polynomial time. This algorithm, introduced by Bellman in his foundational book ``Dynamic Programming'' from 1957~\cite{Bellman57}, computes the set of all subset sums~\makebox{$\SSS(X, t) := \SSS(X) \cap [0 \ldots t]$} with a time complexity of $\Order(|\SSS(X, t)| \cdot n)$, which is often bounded by $\Order(t n)$; see \cref{alg:bellman} for the pseudocode. Bellman's algorithm constitutes one of the first and simplest examples of dynamic programming, and forms the baseline for numerous follow-up results~\cite{Bellman57,Pisinger99,Pisinger03,KoiliarisX19,Bringmann17,KoiliarisX18,JinW19,BringmannN20,BringmannW21,ChenLMZ24a,ChenLMZ24b}. Despite this extensive research, it is not known whether Bellman's running time can be improved by even a moderate factor of $n^{\epsilon}$ in the worst case, i.e., the following question remained open:
\begin{quote}
    \centering
    \medskip
    \emph{Can Bellman's algorithm be improved to time $\widetilde\Order(|\SSS(X, t)| \cdot n^{1-\epsilon})$ for some $\epsilon > 0$?}
\end{quote}

\subsection{State of the Art}
While this question indeed has not been answered in its full generality, there has been increasing effort on what can be viewed as improving Bellman's algorithm in special cases. A successful line of research studied Subset Sum with the target $t$ as the parameter; in this parameterization Bellman's algorithm runs in time $\Order(t n)$. The first mild improvement is due to Pisinger~\cite{Pisinger03} who achieved a log-shave in the word RAM model. Then, following Koiliaris and Xu's~\cite{KoiliarisX19,KoiliarisX18} deterministic $\widetilde\Order(t \sqrt{n})$ algorithm, Bringmann~\cite{Bringmann17} devised a randomized algorithm running in time~\makebox{$\widetilde\Order(t + n)$}. See also~\cite{JinW19} for an alternative simpler proof. This running time is optimal in the sense that fine-grained lower bounds rule out algorithms with dependence~$t^{0.99}$~\makebox{\cite{Bringmann17,AbboudBHS22}}. However, these algorithms~\cite{Bringmann17,JinW19} only outperform Bellman's algorithm in the \emph{dense} regime where~\makebox{$|\SSS(X, t)| \approx t$} (specifically, for instances where $|\SSS(X, t)| \gg t / n$).

On the other hand, focusing on the ``output-sensitive'' setting, Bringmann and Nakos~\cite{BringmannN20} designed an $\widetilde\Order(|\SSS(X, t)|^{4/3})$-time algorithm. This again is preferable over Bellman's algorithm in a specific \emph{sparse} regime, namely when $|\SSS(X, t)| \ll n^3$. (Notably, just like Bellman's algorithm all aforementioned algorithms compute the entire set $\SSS(X, t)$.) Additionally, Subset Sum has been studied with respect to many other natural parameters such as the maximum item size~\cite{Pisinger99,BringmannW21,PolakRW21,ChenLMZ24a,ChenLMZ24b,AxiotisBJTW19,AxiotisBBJNTW21,CardinalI21,Potepa21}.

In summary, none of the known algorithms addresses the intermediate regime between extremely dense and extremely sparse and thus none of the known algorithms unconditionally beats Bellman's algorithm. On the other hand, we have so far no reason to suspect that time~\smash{$\widetilde\Order(|\SSS(X, t)| \cdot n)$} is best-possible. The only barrier we are aware of was introduced by Bringmann and Nakos~\cite{BringmannN20}: They rule out truly linear time $\widetilde\Order(|\SSS(X, t)|)$ for a restricted but natural class of algorithms based on ``rectangle coverings''.

\subsection{Our Contribution}
The core contribution of this paper is a positive resolution of our driving question, marking the first true improvement on Bellman's algorithm since over 60 years. Specifically, we show the following:

\begin{theorem} \label{thm:subset-sum}
Given a multiset $X \subseteq [0 \ldots t]$, we can compute $\SSS(X, t)$ in time $\widetilde\Order(|\SSS(X, t)| \cdot \sqrt{n})$ (by a randomized algorithm that succeeds with high probability).
\end{theorem}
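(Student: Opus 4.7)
The plan is to beat Bellman via a square-root decomposition of the items, combined with output-sensitive sparse convolution and a key structural input from additive combinatorics. Partition $X$ into $k := \lceil \sqrt n \rceil$ batches $X_1, \dots, X_k$ of roughly $\sqrt n$ items each, compute each batch sumset $A_i := \SSS(X_i, t)$, and then, starting from $S_0 := \{0\}$, combine iteratively by $S_i := (S_{i-1} + A_i) \cap [0, t]$, so that $S_k = \SSS(X, t)$. Each combination step uses a (randomized) sparse-convolution routine running in $\widetilde\Order(|A + B|)$ time and produces a set of size at most $L := |\SSS(X, t)|$, so the total cost of the $k$ combines is $\widetilde\Order(L \sqrt n)$, already matching the target bound.

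The hard part is to compute all the batch sumsets $A_i$ within the same budget. A naive application of Bellman inside each batch costs $\Theta(|A_i| \sqrt n)$, summing to $\Theta\bigl((\sum_i |A_i|)\, \sqrt n\bigr)$; in the worst case $\sum_i |A_i|$ can be as large as $\Theta(\sqrt n \cdot L)$---think of inputs with heavy multiplicities, where each batch alone already generates essentially all relevant sums---which merely reproduces Bellman's $\Theta(n L)$ bound. To circumvent this, I would appeal to an additive-combinatorial result, the main innovation of the paper, showing that after a suitable (possibly random) choice of partition and a restriction of each batch sumset to its \emph{useful} part $A_i' \subseteq A_i$---those sums that can actually be completed to a total in $[0, t]$ using sums from the other batches---one has $\sum_i |A_i'| = \widetilde\Order(L)$ while still preserving $(A_1' + \dots + A_k') \cap [0, t] = \SSS(X, t)$. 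Granting this, Bellman on each batch, restricted to the range producing only $A_i'$, costs $\widetilde\Order(|A_i'| \sqrt n)$, summing over the batches to $\widetilde\Order(L \sqrt n)$.

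The main obstacle is therefore exactly this combinatorial estimate $\sum_i |A_i'| = \widetilde\Order(L)$. It cannot hold for an arbitrary partition (by the multiplicity example above), so the argument must combine a careful choice of partition with a Pl\"unnecke--Ruzsa-type inequality, or a more recent sumset estimate, that relates the size of a $k$-fold iterated sumset $A_1 + \dots + A_k$ to its individual summands after restriction. I expect the technical heart of the paper to be exactly this structural input, together with preprocessing steps---such as handling heavy-multiplicity items separately, or bucketing items by size in order to control the truncation thresholds defining the $A_i'$---that reduce a general input to the form where the combinatorial bound applies.
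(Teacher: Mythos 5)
There are two genuine gaps here. First, your combine step: forming $S_i = (S_{i-1}+A_i)\cap[0,t]$ by an output-sensitive sparse convolution costs $\widetilde\Order(|S_{i-1}+A_i|)$, not $\widetilde\Order(|S_i|)$, and $|S_{i-1}+A_i|$ can vastly exceed $|\SSS(X,t)|$ because the sumset spills into $(t,2t]$. (The paper's extreme example: $A = \frac{t}{2}+\{1,\dots,n\}$ and $B=\frac{t}{2}+\{n,2n,\dots,n^2\}$ satisfy $|A+B|=\Omega(n^2)$ while the restriction to $[0,t]$ is empty.) One would instead need a prefix-restricted sumset algorithm; Bringmann--Nakos give one in time $\widetilde\Order(\sqrt{|A|\,|B|\,|C|})$, but with your flat $\sqrt n$-way partition that can be $\Theta(L^{3/2})$ per combine, which is too slow. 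Second, your notion of ``useful part'' $A_i'$ is vacuous: since $0 \in A_j$ for every batch $j$, any $a \in A_i$ can be completed by taking $0$ from all other batches, so $A_i' = A_i$ and the hoped-for bound $\sum_i|A_i'| = \widetilde\Order(L)$ collapses to $\sum_i|A_i| = \widetilde\Order(L)$---precisely the claim you yourself observe to be false.

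The paper also structures the algorithm quite differently: there is no flat $\sqrt n$-way batching, and the $\sqrt n$ factor does not come from batch size. Items are split into small ($\le t/K^5$) and large; the small items are randomly partitioned into only $K = \Theta(\log|\SSS(X,t)|)$ parts and solved recursively with the \emph{reduced} target $(1+\tfrac1K)\tfrac{t}{K}$, so that the omit-one sumsets $B_i = \sum_{j\neq i}\SSS(X_j,(1+\tfrac1K)\tfrac{t}{K})$ stay inside $[0,t]$ and hence inside $\SSS(X,t)$. The Gyarmati--Matolcsi--Ruzsa sub-multiplicativity bound $|A_1+\dots+A_K|\le\parens*{\prod_i|B_i|}^{1/(K-1)}$ (not Pl\"unnecke--Ruzsa) then gives $\abs*{\sum_i A_i} \le |\SSS(X,t)|^{1+1/(K-1)} = \Order(|\SSS(X,t)|)$ for this $K$, which both tames the sparse convolutions and, via \cref{lem:sumset-lower-bound}, bounds $\sum_i|A_i|$. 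The $\sqrt n$ in the final running time comes entirely from the large items, which are folded in by color coding plus iterated prefix-restricted sumset computations at cost $\widetilde\Order(\sqrt{|A|\,|B|\,|C|})$ each, totalling $\widetilde\Order(|\SSS(X,t)|\sqrt{n})$. So the key missing ideas are the reduced recursive target, which makes the omit-one sumsets land in $\SSS(X,t)$ so that sub-multiplicativity applies, and the realization that $\sqrt n$ arises from prefix-restricted sumsets on large items rather than from a square-root decomposition of $X$.
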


Our algorithm is based on a careful combination of three main tools: The first is Bringmann's color coding technique~\cite{Bringmann17}. The second is sparse convolution to compute sumsets in output-sensitive time. The third and most exciting ingredient is an estimate from additive combinatorics required for the key step in the analysis. Specifically, we rely on a certain \emph{sub-multiplicativity} property of sumsets proved by Gyarmati, Matolcsi and Ruzsa~\cite{GyarmatiMR10}. It states that for all sets $A_1, \dots, A_K$ we can bound the size of the sumset
\begin{equation*}
    |A_1 + \dots + A_K| \leq \parens*{\prod_{i=1}^K |B_i|}^{\frac{1}{K-1}},
\end{equation*}
where $B_i = A_1 + \dots + A_{i-1} + A_{i+1} + \dots + A_K$ is the same sumset where we omit one term (see \cref{thm:submultiplicativity} for details). All in all, we arrive at a satisfyingly clean and even somewhat simple algorithm.

We remark that the same ``three-component-recipe'' has also been successfully applied in the recent algorithms due to Chen, Lian, Mao and Zhang~\cite{ChenLMZ24c,ChenLMZ24d}. More generally, employing tools from additive combinatorics in fine-grained algorithms for Subset Sum-type problems has evolved into a modern trend~\cite{ChanL15,BringmannN20,BringmannW21,BringmannN21,AbboudBF23,JinX23,ChenLMZ24a,Bringmann24,Jin24,ChenLMZ24b,ChenLMZ24c,ChenLMZ24d}. However, the specific additive combinatorics result we rely on has not been applied in algorithm design before (to the best of our knowledge). We thus bring a new additive combinatorics tool to the table of algorithm design, and we are confident that there are further applications of this tool---for Subset Sum or possibly beyond---given the general prevalence of sumsets in algorithms.

Of course it is desirable to improve our result further, to time $\widetilde\Order(|\SSS(X, t)| \cdot n^{1/3})$, say, or perhaps even to $\widetilde\Order(|\SSS(X, t)|)$. We leave this as an open question. The bottleneck of our algorithm is a routine called \emph{prefix-restricted sumset} computations as first considered by~\cite{BringmannN20}, for which the currently known algorithms have an overhead of $\sqrt{n}$ (for details see \cref{sec:subset-sum} and \cref{thm:prefix-sumset-high-dim}). Unfortunately, all known algorithms further fall into the class of ``rectangle covering'' algorithms mentioned before, and it thus unconditionally ruled out that this class of prefix-restricted sumset algorithms leads to near-linear running time~\cite{BringmannN20}. It is our impression that very different ideas are necessary to make further progress.

\subsection{Extensions}
To explore the flexibility of our new approach we investigate whether we can achieve similar improvements for other Subset Sum-type problems.

\paragraph{Extension 1: Unbounded Subset Sum}
We call an integer an \emph{unbounded} subset sum of $X$ if it can be expressed as the sum of elements from $X$ that is allowed to contain each element several times (and not only once). We write $\SSS^*(X, t)$ for the set of unbounded subset sums of $X$ in the range~$[0 \ldots t]$. Computing the unbounded subset sums easily reduces to the bounded case by duplicating each item $x$ to items $x, 2x, 4x, \dots \leq t$. Typically, however, Unbounded Subset Sum has significantly simpler and \emph{deterministic} algorithms (as we can avoid color coding). We show that our technique follows this heuristic and extends to Unbounded Subset Sum with essentially the same time complexity:

\begin{theorem}[Unbounded Subset Sum] \label{thm:unbounded}
Given a multiset $X \subseteq [0 \ldots t]$ of size $n$ we can compute $\SSS^*(X, t)$ in time~\smash{$\Order(|\SSS^*(X, t)| \cdot \sqrt{n} \cdot \polylog(t))$} (by a deterministic algorithm).
\end{theorem}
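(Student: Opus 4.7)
The plan is to follow the same three-step framework as the proof of \cref{thm:subset-sum}---partition $X$ into $\sqrt{n}$ groups, compute unbounded subset sums within each group, and merge via sumsets---while dispensing with Bringmann's color coding step. In the bounded setting, color coding is used to isolate the (limited) multiplicities of items participating in a particular subset sum; in the unbounded setting items may be reused freely, so such isolation is irrelevant. Removing color coding is precisely what makes the algorithm deterministic, and the remaining two ingredients (sparse convolution and the Gyarmati--Matolcsi--Ruzsa sub-multiplicativity estimate) are themselves deterministic.

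Concretely, partition $X$ arbitrarily into $k := \lceil \sqrt{n} \rceil$ groups $X_1, \dots, X_k$ of size at most $\sqrt{n}$, and for each $i$ compute $S_i := \SSS^*(X_i, t)$ via a sparse-convolution-based closure procedure, such as repeated doubling $A_{j+1} := (A_j + A_j) \cap [0 \ldots t]$ starting from $A_0 := X_i \cup \{0\}$ for $\Order(\log t)$ rounds. Each sparse convolution call costs $\widetilde\Order(|A_j + A_j|)$; since $A_j + A_j \subseteq \SSS^*(X_i, 2t)$, a Pl\"unnecke--Ruzsa-type bound gives $|A_j + A_j| = \Order(|S_i| \cdot \polylog(t))$, so each group is handled in $\Order(|S_i| \cdot \polylog(t))$ time. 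Summing over the $\sqrt{n}$ groups and using $|S_i| \le |\SSS^*(X, t)|$ yields $\Order(\sqrt{n} \cdot |\SSS^*(X, t)| \cdot \polylog(t))$ for this phase.

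Next, merge the $S_i$ into $\SSS^*(X, t) = (S_1 + \dots + S_k) \cap [0 \ldots t]$ by iterated sumset operations with intermediate truncation, e.g.\ in a balanced binary tree. The truncated intermediates are all contained in $\SSS^*(X, t)$, but sparse convolution operates on full (untruncated) sumsets, whose sizes we control using the Gyarmati--Matolcsi--Ruzsa sub-multiplicativity estimate (\cref{thm:submultiplicativity})---exactly the same key application as in \cref{thm:subset-sum}. Sub-multiplicativity keeps the cost of each of the $\Order(k)$ merge steps within a $\polylog$ factor of $|\SSS^*(X, t)|$, so the merge phase too runs in $\Order(\sqrt{n} \cdot |\SSS^*(X, t)| \cdot \polylog(t))$, matching the claimed running time.

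The main obstacle is identical to the bounded case: invoking sub-multiplicativity to keep the intermediate sumsets during the merge within a polylog factor of $|\SSS^*(X, t)|$, an argument that transplants directly from the proof of \cref{thm:subset-sum}. A secondary technical point specific to the unbounded setting is the doubling-ratio bound $|\SSS^*(X_i, 2t)| = \Order(|S_i| \cdot \polylog(t))$ used in the per-group phase, which follows from a standard additive-combinatorics argument on the structured set $\SSS^*(X_i, t)$. Since every subroutine---partitioning, sparse convolution, sub-multiplicativity, and the doubling-ratio bound---is deterministic, so is the final algorithm, as claimed.
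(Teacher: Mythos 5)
Your proposal has several genuine gaps, the most serious being that the sub-multiplicativity estimate does not apply the way you use it, and the doubling-ratio claim in the per-group phase is simply false.

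\paragraph{The sub-multiplicativity argument does not apply with target $t$.} You compute $S_i = \SSS^*(X_i, t)$ with the \emph{full} target $t$ and then try to control $|S_1 + \dots + S_k|$ via \cref{thm:submultiplicativity}. But the leave-one-out sumsets $B_i = \sum_{j \neq i} S_j$ live in $[0 \ldots (k-1)t]$, not in $[0 \ldots t]$, so there is no reason for $|B_i|$ to be bounded by $|\SSS^*(X, t)|$---and consequently the theorem gives no useful bound. This is precisely why the paper's recursion uses the \emph{shrunk} target $(1 + \tfrac{1}{K}) \cdot \tfrac{t}{K}$: then $(K-1)\cdot(1+\tfrac{1}{K})\cdot\tfrac{t}{K} < t$, so each $B_i \subseteq \SSS^*(X, t)$ and sub-multiplicativity yields $|A_1 + \dots + A_K| \le |\SSS^*(X, t)|^{K/(K-1)}$. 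Dropping the shrinking target is not a cosmetic simplification; it breaks the only mechanism that keeps the merge cost near-linear. Similarly, even if you abandon the sub-multiplicativity route and instead use prefix-restricted sumset computations with intermediate truncation, the relevant cost bound is $\sqrt{|A|\,|B|\,|C|}$, and with $|A|, |B|, |C|$ all potentially as large as $|\SSS^*(X, t)|$ this is $|\SSS^*(X, t)|^{3/2}$, far above budget. (The paper escapes this because the prefix-restricted sumsets are applied with one side being an \emph{item set} of size $\le n$, giving the $\sqrt{n}$ factor; your merge has subset-sum sets on both sides.)

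\paragraph{The doubling-ratio bound is false.} You claim $|A_j + A_j| = \Order(|S_i| \cdot \polylog t)$ because $A_j + A_j \subseteq \SSS^*(X_i, 2t)$. But $\SSS^*(X_i, 2t)$ can be quadratically larger than $\SSS^*(X_i, t)$: take $X_i = \{\lceil t/2\rceil + n^\ell : 0 \le \ell < m\}$ for suitable $n, m$ with $n^{m-1} \le t/2$. Then every unbounded subset sum in $[0 \ldots t]$ uses at most one item, so $|\SSS^*(X_i, t)| = m+1$, while the pairwise sums $t + n^a + n^b$ are all distinct, giving $|\SSS^*(X_i, 2t)| = \Theta(m^2)$. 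There is no ``standard Pl\"unnecke--Ruzsa-type'' bound here; Pl\"unnecke--Ruzsa requires a small-doubling \emph{hypothesis}, which these sets do not satisfy. So your closure procedure for a single group already has no cost guarantee.

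\paragraph{The missing structural ingredient.} The paper's \cref{thm:unbounded-high-dim} does not partition $X$ into $\sqrt n$ arbitrary groups at all. It splits $X$ into \emph{small} items $X_S$ (with entries $\le t/K^5$) and \emph{large} items $X_L$. The small items are handled recursively with target $(1+\tfrac{1}{K})\tfrac{t}{K}$ and merged by $K$ sparse sumsets (whose sizes are controlled by sub-multiplicativity because the $B_i$ stay in $[0\ldots t]$). The large items are incorporated by iterating the prefix-restricted sumset $Z \gets (Z + X_L)\cap [0\ldots t]$ a polylogarithmic number of times, since any unbounded subset sum in $[0\ldots t]$ can contain at most $K^5 d$ large items; this is where the $\sqrt{|X|}$ factor in the running time originates. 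Your proposal omits the small/large split entirely, which means both the recursion-depth control and the source of the $\sqrt{n}$ factor are missing from your argument.
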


Here the additional overhead $\polylog(t)$ is mainly caused by the deterministic sparse convolution algorithm~\cite{BringmannFN22}. 

\paragraph{Extension 2: High-Dimensional Subset Sum}
We also consider the natural generalization from Subset Sum on integers to high dimensions, i.e., vectors. That is, for a set of integer vectors~\smash{$X \subseteq [0 \ldots t]^d$}, the goal is to compute the set of subset sums in the cube $[0 \ldots t]^d$, denoted as before by $\SSS(X, t)$. We assume $d$ to be constant. This version of Subset Sum can still be solved in time~$\Order(|\SSS(X, t)| \cdot n)$ by Bellman's algorithm (without changes). It turns out that our algorithm generalizes in an interesting way, with weaker bounds for higher dimensions, but nevertheless a polynomial improvement over Bellman's algorithm:

\begin{restatable}[High-Dimensional Subset Sum]{theorem}{thmsubsetsumhighdim} \label{thm:subset-sum-high-dim}
Let $d \geq 1$ be a constant. Given a size-$n$ multiset~\makebox{$X \subseteq [0 \ldots t]^d$} we can compute $\SSS(X, t)$ in time~\smash{$\widetilde\Order(|\SSS(X, t)| \cdot n^{1-\frac{1}{d+1}})$} (by a randomized algorithm that succeeds with high probability).
\end{restatable}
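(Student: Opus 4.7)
The plan is to generalize the three-ingredient recipe behind \cref{thm:subset-sum} to the $d$-dimensional cube $[0..t]^d$: (i) Bringmann-style color coding, (ii) a $d$-dimensional box-restricted (``prefix-restricted'') sumset primitive, and (iii) the Gyarmati–Matolcsi–Ruzsa sub-multiplicativity inequality. Since GMR is stated for an arbitrary abelian group, it applies to $\Int^d$ verbatim, and the $d$-dimensional analogue of the 1D prefix-restricted sumset subroutine is exactly \cref{thm:prefix-sumset-high-dim}. The remaining work is therefore mainly combinatorial: reworking the size analysis so that it respects the $d$-dimensional geometry.

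I would first randomly partition $X$ into $K$ color classes $X_1,\dots,X_K$ (with $K$ a parameter to be tuned) and repeat $\Order(\log n)$ times to boost success probability. A coordinate-wise application of Bringmann's color-coding lemma guarantees that with high probability every $s \in \SSS(X,t)$ decomposes as $s = s_1 + \cdots + s_K$ with $s_i \in A_i := \SSS(X_i) \cap [0..t]^d$. Hence it suffices to compute each $A_i$ (by Bellman on the color class, or recursively) and then combine them in a balanced binary tree, at each internal node $I \subseteq [K]$ invoking \cref{thm:prefix-sumset-high-dim} to produce $A_I := \bigl(\sum_{i \in I} A_i\bigr) \cap [0..t]^d$ in output-sensitive time.

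The running time then boils down to bounding $\sum_I |A_I|$ over the nodes of the tree. For the root, GMR gives directly
\[
|A_{[K]}| \;\le\; \bigl(\textstyle\prod_{i=1}^K |B_i|\bigr)^{1/(K-1)},
\]
where $B_i = \sum_{j\ne i} A_j$ is the leave-one-out sumset. Iterating this along the subtrees, possibly combined with Plünnecke–Ruzsa-type consequences, I would argue $|A_I| \le |\SSS(X,t)| \cdot \polylog(n) \cdot \poly(K)$ for every tree node $I$. Balancing the per-class cost of computing the leaves $A_i$ against the $n^{1-1/(d+1)}$ overhead incurred by each $d$-dim box-restricted sumset call then selects $K = \Theta(n^{1/(d+1)})$ and yields the claimed bound $\widetilde\Order(|\SSS(X,t)| \cdot n^{1-1/(d+1)})$.

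The main obstacle I expect is promoting the single GMR estimate from the root of the combination tree to \emph{every} internal node, since GMR as stated only controls the full sumset. In the 1D proof this is handled through tight use of the prefix order on $[0..t]$; in $d$ dimensions the box $[0..t]^d$ interacts with GMR in a more subtle way, and tracking how the exponent $1/(d+1)$ falls out of the interplay between sub-multiplicativity and the $d$-dim convolution overhead of \cref{thm:prefix-sumset-high-dim} is the quantitative heart of the argument. A secondary, more routine obstacle is verifying that the color-coding step still succeeds with the right parameters once $K$ is chosen as a function of both $n$ and $d$.
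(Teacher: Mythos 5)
You correctly identify the three ingredients, but the way you propose to combine them is off at several key points, and the gap is not where you expect it.

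\textbf{The GMR estimate does not bound anything without a target reduction.} You set $A_i := \SSS(X_i)\cap[0..t]^d$ and then invoke GMR with $B_i = \sum_{j\ne i}A_j$. But $B_i$ contains elements with coordinates up to $(K-1)t$, so $B_i \not\subseteq \SSS(X,t)$ and you have no bound on $|B_i|$ in terms of $|\SSS(X,t)|$. This is not a root-vs-internal-node issue --- the root estimate already gives you nothing. The paper's fix is to first strip off \emph{large} items (those with a coordinate exceeding $t/K^5$), then run the $K$-way color coding only on the remaining small items and, crucially, give each recursive call the \emph{shrunken} target $(1+\tfrac1K)\cdot\tfrac{t}{K}$. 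Hoeffding's inequality (\cref{lem:small-items-high-dim}) shows this shrunk target still captures all of $\SSS(X_S,t)$, and now $(K-1)(1+\tfrac1K)\cdot\tfrac{t}{K}\le t$ forces $B_i\subseteq\SSS(X,t)$, so GMR yields $|A_1+\dots+A_K|\le|\SSS(X,t)|^{1+\frac1{K-1}}$. Taking $K=\Theta(\log s)$ --- \emph{logarithmic}, not your proposed $\Theta(n^{1/(d+1)})$ --- makes the exponent loss a constant factor.

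\textbf{Prefix-restricted sumsets are too expensive to use at every tree node, and the $n^{1-1/(d+1)}$ term does not come from them in the way you describe.} For the small-item combination, the paper uses \emph{plain} output-sensitive sumset computation (\cref{lem:sparse-conv}), because the GMR bound already controls the output size; it explicitly notes that the $\widetilde\Order(\sqrt{|A||B||C|})$-style prefix-restricted routine can cost $\Omega(|\SSS(X_S,t)|^{3/2})$ here. The $n^{1-1/(d+1)}$ factor in the final running time instead comes exclusively from the \emph{large-item} phase: after $Z = \SSS(X_S,t)$ is known, the algorithm folds in the large items via $K^5 d$-wise color coding (\cref{lem:large-items-high-dim}), and each call to \cref{thm:prefix-sumset-high-dim} there has one operand $B\subseteq X_L\cup\{0\}$ of size at most $n+1$, so the bound $(|A||B|)^{1-\frac1{d+1}}|C|^{\frac1{d+1}}$ collapses to $\widetilde\Order(|Z'|\cdot n^{1-\frac1{d+1}})$. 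Your accounting, which wants to balance a per-leaf cost against a uniform $n^{1-1/(d+1)}$ overhead over all $2K-1$ tree nodes, has no analogue in the correct argument. In short, you need (i) the small/large split, (ii) the $(1+\tfrac1K)\cdot\tfrac tK$ target reduction, (iii) $K=\Theta(\log s)$, and (iv) restriction of prefix-restricted sumsets to the large-item phase where one side has size $\Order(n)$; none of these appear in the proposal.
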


For Unbounded Subset Sum the above result can be achieved deterministically with a $\polylog(t)$ factor overhead, see section~\ref{sec:unbounded} for details. Similarly to the one-dimensional case, the bottleneck of the high-dimensional variants lies in prefix-restricted computations (in high dimensions).

\subsection{Outline}
We structure the remainder of this paper as follows. In \cref{sec:prelims} we clarify some preliminary definitions. Then we describe our main algorithm for Subset Sum in \cref{sec:subset-sum}; for the convenience of the reader we start with a high-level technical overview of our ideas. In the following \cref{sec:high-dim,sec:unbounded} we then provide details for the two extensions.
\section{Preliminaries} \label{sec:prelims}
For real numbers $m, n$, we write $[m \ldots n] = \set{ \lceil m \rceil, \dots, \lfloor n \rfloor}$ and $[n] = [1 \ldots n]$. We also set $\poly(n) = n^{\Order(1)}$, $\polylog(n) = (\log n)^{\Order(1)}$ and~\smash{$\widetilde\Order(n) = n \polylog(n)$}.

\paragraph{Sumsets}
For sets $A, B \subseteq \Int^d$ we define the \emph{sumset} $A + B = \set{a + b : a \in A,\, b \in B}$. For a single integer $a \in \Int$ we occasionally write $a + B = \set{a + b : b \in B}$. The following well-known lemma~\cite[Lemma~5.3]{TaoV06} will be useful throughout; we include a quick proof for convenience.

\begin{lemma} \label{lem:sumset-lower-bound}
Let $A_1, \dots, A_K \subseteq \Int$. Then 
\begin{equation*}
    |A_1|+ \dots + |A_K| \le |A_1+ \dots + A_K| + K - 1.
\end{equation*}
\end{lemma}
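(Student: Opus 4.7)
The plan is to proceed by induction on $K$, reducing everything to the classical base case $|A+B| \geq |A| + |B| - 1$ for finite sets $A, B \subseteq \Int$. This base case is the only place where the one-dimensional ordering of $\Int$ is actually used, and everything else is bookkeeping.

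For the base case $K=2$, I would enumerate $A = \set{a_1 < \dots < a_m}$ and $B = \set{b_1 < \dots < b_n}$ and exhibit an explicit ascending chain in $A+B$:
\begin{equation*}
    a_1 + b_1 < a_2 + b_1 < \dots < a_m + b_1 < a_m + b_2 < \dots < a_m + b_n.
\end{equation*}
These are $m + n - 1$ strictly increasing elements of $A+B$, which gives $|A+B| \geq |A| + |B| - 1$, i.e.\ the desired inequality for $K=2$.

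For the inductive step, assume the claim holds for some $K-1 \geq 2$ and set $S = A_1 + \dots + A_{K-1}$. Applying the base case to the pair $(S, A_K)$ yields $|S + A_K| \geq |S| + |A_K| - 1$, and applying the inductive hypothesis to $A_1, \dots, A_{K-1}$ yields $|S| \geq |A_1| + \dots + |A_{K-1}| - (K-2)$. Chaining these two inequalities gives
\begin{equation*}
    |A_1 + \dots + A_K| \;\geq\; |A_1| + \dots + |A_K| - (K-1),
\end{equation*}
which rearranges to the statement of the lemma.

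There is no real obstacle here: the lemma is a textbook consequence of the fact that $\Int$ is totally ordered, and my outline is essentially the shortest route. The only thing one must be careful about is the edge cases where some $A_i$ is empty (in which case both sides are interpreted appropriately, and the inequality holds trivially) or a singleton (which causes no issue since the inductive step still applies).
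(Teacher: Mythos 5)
Your proof is correct and takes essentially the same route as the paper: establish the $K=2$ case via an explicit strictly increasing chain of $|A|+|B|-1$ elements in $A+B$, then induct on $K$; the only cosmetic difference is that you increment the $a$-index first and the $b$-index second, while the paper does the reverse. One small correction to your final remark: if some $A_i$ is empty, the sumset $A_1 + \dots + A_K$ is empty, so the inequality can actually \emph{fail} (e.g.\ $A_1 = \emptyset$, $A_2 = \set{1,2,3}$ gives $3 \le 1$), rather than hold trivially; both your argument and the paper's implicitly assume all $A_i$ are non-empty, which is harmless in context since the lemma is applied only to sets of subset sums, which always contain $0$.
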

\begin{proof}
It suffices to prove the statement for $K = 2$; the full statement then easily follows by induction. To show that $|A + B| \geq |A| + |B| - 1$, let $a_1 < \dots < a_{|A|}$ and $b_1 < \dots < b_{|B|}$ denote the elements in $A$ and $B$, respectively. Then note that~\makebox{$a_1 + b_1 < \dots < a_1 + b_{|B|} < a_2 + b_{|B|} < \dots < a_{|A|} + b_{|B|}$}, and all of these $|A| + |B| - 1$ elements are clearly contained in $A + B$.
\end{proof}

It is the topic of a long line of research that sumsets can be computed in output-sensitive time~\cite{ColeH02,ChanL15,ArnoldR15,Nakos20,GiorgiGC20,BringmannFN21,BringmannFN22,JinX24}. Specifically, we will rely on the following facts:

\begin{lemma}[Sumset Computation~\cite{BringmannFN22,JinX24}] \label{lem:sparse-conv}
Given $A, B \subseteq [0\ldots t]$ we can compute $A+B$ in time
\begin{itemize}
    \item $|A+B|\cdot \polylog(t)$ by a deterministic algorithm, and
    \item $\Order(|A+B|\cdot \log|A+B|)$ by a Las Vegas algorithm (assuming a word RAM model with word size $\Theta(\log t)$).
\end{itemize} 
\end{lemma}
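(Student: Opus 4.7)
The plan is to reduce sumset computation to sparse polynomial multiplication. Encode $A$ and $B$ by their indicator polynomials $f_A(z) = \sum_{a \in A} z^a$ and $f_B(z) = \sum_{b \in B} z^b$ in $\Int[z]$, each of degree at most $t$. Since all coefficients are nonnegative, the support of the product $f_A \cdot f_B$ equals $A+B$, so the task is to output the nonzero monomials of this product in time proportional to their number.

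For the Las Vegas bound I would run a hash-and-recover sparse-FFT scheme inside a doubling guess for $s := |A+B|$. Pick a random linear hash $h(x) = (\alpha x \bmod p) \bmod m$ with $p = \Theta(t)$ prime and $m = \Theta(s)$, and perform two dense convolutions of length $\Theta(m)$: a \emph{count} convolution $(\sum_a z^{h(a)}) \cdot (\sum_b z^{h(b)})$ giving the multiplicity of sums $a+b$ in each bucket, and a \emph{phase} convolution of the form $(\sum_a a \cdot z^{h(a)}) \cdot (\sum_b z^{h(b)}) + (\sum_a z^{h(a)}) \cdot (\sum_b b \cdot z^{h(b)})$ giving the sum of the values $a+b$ in each bucket. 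Coordinate-wise division recovers the exact value $a+b$ whenever the bucket is isolated. A Markov argument shows that a constant fraction of $A+B$ is isolated per round; peeling off the recovered elements and iterating on the residual gives a geometric series, with total running time $\Order(|A+B| \log|A+B|)$ in the word RAM model.

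For the deterministic bound the random modulus has to be replaced by an explicit family of pairwise-isolating hash functions. Following~\cite{BringmannFN22}, I would use a small list of primes with the property that, for any set $S \subseteq [0 \ldots 2t]$ of size at most $s$, some prime in the list isolates every element of $S$ after bucketing. Running the bucketed count-and-phase convolutions over all primes in the family and taking the union of recovered values gives correctness, with total cost $|A+B| \cdot \polylog(t)$.

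The hard part is the identification step: extracting the actual value $a+b$ from an isolated bucket without ever touching the $\Theta(m)$ buckets individually. In the randomized setting this is handled cleanly by the phase-convolution trick above. The deterministic version must additionally derandomize both the hash and the recovery so that the explicit construction depends only on~$t$ and not on the unknown set~$A+B$; this is the technical heart of~\cite{BringmannFN22} and is what forces the weaker $\polylog(t)$ overhead in place of $\log|A+B|$.
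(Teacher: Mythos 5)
The paper does not prove \cref{lem:sparse-conv}; it is stated as a black-box citation of Bringmann--Fischer--Nakos~\cite{BringmannFN22} (deterministic bullet) and Jin--Xu~\cite{JinX24} (Las Vegas bullet), so there is no in-paper proof to compare your attempt against.

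As a blind reconstruction of those works, your sketch has the right skeleton: reduce to sparse nonnegative polynomial multiplication, bucket via a random (resp.\ explicit) hash, decode buckets by pairing a count convolution with a phase convolution, peel, and wrap everything in a doubling guess for $|A+B|$. Two things are missing, though. First, count and phase alone do not certify bucket isolation: a collision can produce an integer quotient $\mathrm{phase}/\mathrm{count}$ that looks like a valid sum, so your decoder would silently output wrong elements, making the algorithm Monte Carlo rather than Las Vegas. The standard fix is to also compute a second-moment convolution and accept a bucket only when $\mathrm{phase}^2 = \mathrm{count}\cdot\mathrm{secondmoment}$ (which holds iff the bucket is isolated), and to certify termination by checking that the recovered multiplicities sum to $|A|\cdot|B|$; your write-up never mentions any certification mechanism, which is precisely what separates Las Vegas from Monte Carlo here. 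Second, the clean $\Order(|A+B|\log|A+B|)$ bound is delicate: you must shrink the bucket count $m_i$ geometrically with the residual while still amortizing the per-round rehashing of all of $A,B$, and the analysis of the ``$(\alpha x\bmod p)\bmod m$'' hash is exactly where the large-sieve machinery of~\cite{JinX24} enters to avoid extra $\log\log$ factors that plagued earlier schemes. Your one-line Markov argument is the right intuition but does not by itself establish the stated constant-log bound. The deterministic bullet is summarized at the right level of detail.
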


From this lemma it follows also that we can compute sumsets of vector sets $A, B \subseteq [0\dots t]^d$ with an overhead of $\poly(d)$.\footnote{Specifically, associate each vector $x \in [0\ldots t]^d$ with its base-$(2t+1)$-encoded integer $\phi(x) = \sum_{i=1}^d x[i] (2t+1)^{i-1}$. Then encode $A$ and $B$ as $\phi(A) = \set{\phi(a) : a \in A}$ and $\phi(B)$, compute the integer sumset $\phi(A) + \phi(B)$ and translate back into vectors. The integers have size at most $(2t + 1)^d$, and thus to simulate the sumset algorithm (which assumes a word RAM with word size $\Theta(d \log t)$) on a word RAM with word size $\Theta(\log t)$ we incur an overhead of $\poly(d)$.}

\paragraph{Subset Sum}
To define the (high-dimensional) Subset Sum problem, let $X = \set{x_1, \dots, x_n} \subseteq \Int_{\geq 0}^d$ be a multiset. We define the set of \emph{subset sums} $\SSS(X) = \set{0, x_1} + \dots + \set{0, x_n}$, and the (infinite) set of \emph{unbounded subset sums} $\SSS^*(X) = \set{0, x_1, 2x_1, \dots} + \dots + \set{0, x_n, 2 x_n, \dots}$. For a target $t$, we set~\makebox{$\SSS(X, t) = \SSS(X) \cap [0\ldots t]^d$} and $\SSS^*(X, t) = \SSS^*(X) \cap [0\ldots t]^d$. Then the Subset Sum problem is to compute $\SSS(X, t)$ and the Unbounded Subset Sum problem is to compute $\SSS^*(X, t)$, or, as decision versions, to decide whether $t$ is an (unbounded) subset sum.

We throughout assume that $|X| \leq |\SSS(X, t)|$. This assumption is always true for sets $X \subseteq [0\ldots t]^d$, and can also easily be enforced for multisets in a linear-time preprocessing to remove redundant elements.

\paragraph{Randomization}
We say that an event happens \emph{with high probability} if it happens with probability $1 - n^{-c}$, for some arbitrary prespecified constant $c$. 
\section{Overview for Subset Sum} \label{sec:subset-sum}
In this section we give a high-level description the ideas behind our new Subset Sum algorithm (for the 1-dimensional case, $d = 1$). We defer all formal proofs to \cref{sec:high-dim} (where we prove them in more generality directly for higher dimensions).

\subsection{Overview} \label{sec:subset-sum:sec:overview}
Many algorithms for Subset Sum rely in one way or another on the following general idea: Consider any partition $X = X_1 \sqcup X_2$ (which may be random, or cleverly chosen). If we can efficiently compute the sets $\SSS(X_1, t)$ and $\SSS(X_2, t)$, then we can also compute
\begin{equation} \label{eq:divide-and-conquer}
    \SSS(X, t) = (\SSS(X_1, t) + \SSS(X_2, t)) \cap [0 \dots t].
\end{equation}
This approach is helpful for the design of divide-and-conquer algorithms~\cite{Bringmann17,KoiliarisX19,BringmannN20,ChenLMZ24c,ChenLMZ24d} (i.e, the set~$X$ is repeatedly partitioned into smaller sets), or also to initially partition $X$ into a small number of subsets to ensure some common properties (e.g., partition $X$ into subsets~\makebox{$X \cap (\frac{t}{2}, t],\, X \cap (\frac{t}{4}, \frac{t}{2}],\, \dots$} each of which only contains items of some common scale). We will design a divide-and-conquer algorithm based on the same high-level approach, but which requires significantly more care due to the following issue.

\paragraph{Prefix-Restricted Sumsets}
Computing $\SSS(X, t)$ in~\eqref{eq:divide-and-conquer} is computationally quite nontrivial. To abstract from the above setup let $A = \SSS(X_1, t)$ and $B = \SSS(X_2, t)$. Then computing~$\SSS(X, t)$ can be expressed as the following problem called \emph{prefix-restricted} sumset computation: Given two sets~\makebox{$A, B \subseteq [0 \ldots t]$}, the goal is to efficiently compute~\makebox{$C = (A + B) \cap [0 \ldots t]$}. This problem can trivially be solved in time~$\widetilde\Order(t)$ using the Fast Fourier Transform which is sufficient for many applications~\cite{Bringmann17,JinW19}. However, in our ``output-sensitive'' setting (where the goal is a running time proportional to $\SSS(X, t)$) this would be too expensive. Conceptually, the dream solution would be to find a \emph{(near-)linear-time} algorithm for prefix-restricted sumsets, i.e., an algorithm running in time~\smash{$\widetilde\Order(|A| + |B| + |C|)$}. From this it would easily follow that Subset Sum is in best-possible time~\smash{$\widetilde\Order(|\SSS(X, t)|)$}. However, there are barriers towards achieving such a prefix-restricted sumset algorithm~\cite{BringmannN20}, and we leave this as an important open problem.

On the positive side, we are aware of three nontrivial algorithms: (i) it follows from the sparse convolution literature that we can compute $A + B$ (and thereby $C$) in time~\smash{$\widetilde\Order(|A + B|)$}. While this appears very fast at first glance---after all, $C$ is ``just'' the set~\makebox{$A + B \subseteq [0 \ldots 2t]$} restricted to~$[0 \ldots t]$---it turns out that there are sets $A$ and $B$ for which $|C| \ll |A + B|$.\footnote{For an extreme example, consider the sets $A = \frac{t}{2} + \set{1, 2, \dots, n}$ and $B = \frac{t}{2} + \set{n, 2n, \dots, n^2}$ for some target~\makebox{$t \gg n^2$}. Then $|A| = |B| = n$ and $|A + B| = \Omega(n^2)$, whereas $|C| = 0$.} Bringmann and Nakos~\cite{BringmannN20} proposed two other algorithms for prefix-restricted sumset computations: (ii) a simple one running in time~\smash{$\widetilde\Order(\sqrt{|A| \, |B| \, |C|})$}. And (iii), a more involved algorithm based on Ruzsa's triangle inequality that runs in time~\smash{$\widetilde\Order(|A| + |B| + |C|^{4/3})$}. This last algorithm can be turned into the state-of-the-art~\smash{$\widetilde\Order(|\SSS(X, t)|^{4/3})$}-time algorithm for Subset Sum~\cite{BringmannN20}, but due to the $|C|^{4/3}$ dependence we cannot use it for our goal. The other two algorithms are potentially applicable. 

In summary: Suppose that we attempt to split the given instance into parts $X = X_1 \sqcup X_2$, compute $\SSS(X_1, t_1)$ and $\SSS(X_2, t_2)$ (possibly for different targets $t_1, t_2 \leq t$) and recombine $\SSS(X_1, t_1)$ and $\SSS(X_2, t_2)$ by a prefix-restricted sumset computation. Then, for this to be efficient, we have to make sure that either
\begin{enumerate}[label=(\roman*)]
    \item $|\SSS(X_1, t_1) + \SSS(X_2, t_2)|$ or
    \item $\sqrt{|\SSS(X_1, t_1)| \, |\SSS(X_2, t_2)| \, |\SSS(X, t)|}$
\end{enumerate}
is upper-bounded by $\Order(|\SSS(X, t)| \cdot \sqrt{|X|})$.

\paragraph{Large Items}
The previous paragraph severely restricts our freedom in partitioning $X$. Luckily, one very useful partition can nevertheless still be efficiently dealt with: the partition into \emph{small items} $X_S = \set{x \in X : x \leq \gamma t}$ and \emph{large items} $X_L = X \setminus X_S$ (for some parameter~\makebox{$\gamma > 0$} which we will determine later). The main challenge in the algorithm will be to compute the subset sums attained by small items,~$\SSS(X_S, t)$. Suppose for the moment that we have already computed~$\SSS(X_S, t)$; we prove that based on approach~(ii) we can then also take the large items into account and compute~\makebox{$\SSS(X, t) = (\SSS(X_S, t) + \SSS(X_L, t)) \cap [0 \ldots t]$}. Formally, we prove the following lemma (which should be applied with $Z = \SSS(X_S, t)$):

\begin{lemma}[Large Items] \label{lem:large-items}
Let $\gamma > 0$, let $X_L \subseteq [\gamma t, t]$ be a multiset and let $Z \subseteq [0 \ldots t]$. We can compute $Z' = (Z + \SSS(X_L, t)) \cap [0 \ldots t]$ in time $\widetilde\Order(|Z'| \cdot \sqrt{|X_L|} \cdot \poly(\gamma^{-1}))$ (by a Las Vegas algorithm).
\end{lemma}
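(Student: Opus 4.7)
Since every item in $X_L$ has value at least $\gamma t$, any subset $S \subseteq X_L$ with $\sum_{x \in S} x \leq t$ has cardinality at most $k := \lfloor \gamma^{-1} \rfloor$. My plan is to exploit this cardinality bound via Bringmann's color coding, and to perform each combination step using the prefix-restricted sumset algorithm of approach~(ii)\,---\,i.e., Bringmann--Nakos' $\widetilde\Order(\sqrt{|A|\,|B|\,|C|})$-time Las Vegas algorithm for computing $C = (A+B) \cap [0 \ldots t]$.

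Concretely, randomly partition $X_L$ into $p := \Theta(k^2)$ groups $G_1, \ldots, G_p$. By the birthday bound, for any fixed subset $S$ of size at most $k$ its elements land in pairwise distinct groups with probability at least $1/2$. Running $\Theta(\log |Z'|)$ independent trials and taking the union of outputs ensures that with high probability every $z' = z + \sum S \in Z'$ is produced in at least one trial whose partition places the corresponding $S$ entirely in distinct groups. Within each trial (with the partition fixed) I compute $Y_0 := Z$ and then iteratively
\begin{equation*}
Y_i := (Y_{i-1} + (\{0\} \cup G_i)) \cap [0 \ldots t] \quad \text{for } i = 1, \ldots, p,
\end{equation*}
invoking approach~(ii) at every step. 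A short induction using nonnegativity and the prefix restriction shows that $Y_i = \{z + s_1 + \cdots + s_i : z \in Z,\ s_j \in \{0\} \cup G_j,\ z + s_1 + \cdots + s_i \leq t\}$, so in particular $Y_i \subseteq Z'$; moreover, after the trial terminates, $Y_p$ contains precisely those $z' \in Z'$ whose witnessing subset is split into distinct groups by this partition.

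Step $i$ therefore costs $\widetilde\Order(\sqrt{|Y_{i-1}|(|G_i|+1)|Y_i|}) \leq \widetilde\Order(|Z'|\sqrt{|G_i|+1})$, and Cauchy--Schwarz gives per-trial time $\widetilde\Order(|Z'|\sqrt{p(|X_L|+p)})$. Plugging in $p = \Theta(\gamma^{-2})$ yields $\widetilde\Order(|Z'|\sqrt{|X_L|}\,\poly(\gamma^{-1}))$, and the $\Theta(\log |Z'|)$ trials only contribute polylog factors. The Las Vegas guarantee is inherited from the sparse-convolution-based prefix-restricted sumset algorithm. I expect the main technical care to go into the color-coding analysis\,---\,boosting per-subset success probability to high probability simultaneously over all (up to $|Z'|$) elements of $Z'$ via a union bound\,---\,and into arguing a Las Vegas stopping criterion (using, e.g., perfect hash families or a verifiable termination test) so that the final bound is genuinely Las Vegas rather than Monte Carlo.
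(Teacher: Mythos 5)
Your proposal follows essentially the same route as the paper: partition $X_L$ into $\Theta(\gamma^{-2})$ color classes, iterate the Bringmann--Nakos prefix-restricted sumset step across classes (each intermediate set staying inside $Z'$), and bound the per-step cost by $\widetilde\Order(|Z'|\sqrt{|G_i|+1})$. Your Cauchy--Schwarz aggregation is a small refinement over the paper's cruder $|X_k|\le|X_L|$ bound, but both land within the advertised $\widetilde\Order(|Z'|\sqrt{|X_L|}\poly(\gamma^{-1}))$.

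The one real issue is exactly the one you flag at the end. Boosting by $\Theta(\log|Z'|)$ independent random partitions and taking a union gives only a Monte Carlo guarantee: there is no way to certify that the union is all of $Z'$, so this does not self-verify and is not Las Vegas. The paper resolves this precisely the way you hint: it replaces the random partitions by a \emph{deterministic} $m$-perfect hash family built from Reed--Solomon codes (polynomials over $\Field_p$ evaluated at $p$ points, with $p=\Theta(m^2\log|X_L|)$), guaranteeing that for every subset of at most $m\le\gamma^{-1}$ large items some hash function in the family separates it. With the color coding made deterministic, the only remaining randomness is inside the sparse-convolution subroutine, which is already Las Vegas, so the Las Vegas claim is inherited cleanly. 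In short: your plan is right, and the ``verifiable termination test'' you anticipate needing is supplied by a deterministic perfect hash family rather than by repetition.
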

\begin{proof}[Proof Sketch]
The idea is to apply the \emph{color coding} technique as pioneered by Bringmann~\cite{Bringmann17} in the context of Subset Sum. First observe that any subset sum in $\SSS(X_L, t)$ can involve at most~$\gamma^{-1}$ many elements. Therefore, any element $z' \in Z'$ can be expressed as $z' = z + x_1 + \dots + x_m$ where~\makebox{$z \in Z$} (i.e., the contribution of the small elements), and $x_1, \dots, x_m \in X_L$ where $m \leq \gamma^{-1}$. Now consider a uniformly random partition $X_L = X_1 \sqcup \dots \sqcup X_K$ into $K = \Theta(\gamma^{-2})$ \emph{buckets}. By a standard balls-into-bins argument it follows that $x_1 + \dots + x_m \in (X_1 \cup \set{0}) + \dots + (X_K \cup \set{0})$ with constant probability. Thus, the computation of $Z'$ essentially amounts to computing the iterated prefix-restricted sumset
\begin{equation*}
    (Z + (X_1 \cup \set{0}) + \dots + (X_K \cup \set{0})) \cap [0 \ldots t].
\end{equation*}
Specifically, to compute this set start with $Z' \gets Z$. Then, for each $k \gets 1, \dots, K$ we use Bringmann and Nakos' algorithm for prefix-restricted sumsets to update $Z' \gets (Z' + (X_k \cup \set{0})) \cap [0 \ldots t]$. Each step runs in time~\smash{$\widetilde\Order(\sqrt{|Z'| \, |Z'| \, |X_k|}) = \widetilde\Order(|Z'| \cdot \sqrt{X_L})$}, and so the claimed running time follows.

Finally, recall that this only reports any fixed element in $(Z + \SSS(X_L, t)) \cap [0 \ldots t]$ with constant probability. We could either repeat this process for $\Theta(\log |Z'|)$ times, or derandomize the process altogether by employing Reed-Solomon codes. We defer the details to the proof of the more general \cref{lem:large-items-high-dim}.
\end{proof}

\paragraph{Small Items}
It remains to deal with the small items, i.e., to compute $\SSS(X_S, t)$. Conceptually, the advantage of the small items is that we can expect that any subset $z \in \SSS(X_S, t)$ can be partitioned into subset sums $z = z_1 + \dots + z_K$ where~\smash{$z_1, \dots, z_K \lesssim \frac{t}{K}$}. This insight inspires a divide-and-conquer algorithm along the following lines: Partition $X_S = X_1 \sqcup \dots \sqcup X_K$ into some~$K$ parts, compute the subset sums $\SSS(X_1, \frac{t}{K}), \dots, \SSS(X_K, \frac{t}{K})$ \emph{recursively}, and then recover $\SSS(X_S, t)$ as the sumset~\smash{$\SSS(X_1, \frac{t}{K}) + \dots + \SSS(X_K, \frac{t}{K})$}. This appears somewhat promising as this sumset can indeed be computed in time~\smash{$\widetilde\Order(|\SSS(X, t)|)$} (by approach~(i) from before).

Unfortunately, it is unrealistic to use the exact target $\frac{t}{K}$ for the recursive calls---in particular, this would require that the subset sum $t$ could be perfectly partitioned into smaller sums of size at most~$\frac{t}{K}$. Hence, it seems unavoidable to incur some slack and instead use the target~$(1 + \epsilon) \cdot \frac{t}{K}$ (for some parameter $\epsilon > 0$) for the recursive calls. Then, by a standard application of Hoeffding's inequality it follows that indeed $\SSS(X_S, t) = (\SSS(X_1, (1 + \epsilon) \cdot \frac{t}{K}) + \dots + \SSS(X_K, (1 + \epsilon) \cdot \frac{t}{K})) \cap [0\ldots t]$ for an appropriate choice of parameters:

\begin{lemma}[Small Items] \label{lem:small-items}
Let $K \geq 1$ and let $X_S \subseteq [0 \ldots \frac{t}{K^5}]$ be a multiset. Let $X_S = X_1 \sqcup \dots \sqcup X_K$ denote a uniformly random partition into $K$ parts. Then with probability at least $1 - \exp(-K) \cdot |\SSS(X_S, t)|$ it holds that 
\begin{equation*}
    \SSS(X_S, t) = \Big( \sum_{i=1}^K \SSS(X_i, (1 + \tfrac{1}{K}) \cdot \tfrac{t}{K}) \Big) \cap [0\ldots t].
\end{equation*}
\end{lemma}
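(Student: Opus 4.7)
I would prove the two inclusions separately. The inclusion ``$\supseteq$'' is essentially free: any element $z' = z_1 + \dots + z_K$ in the right-hand side has, for each $i$, a witness subset $Y_i \subseteq X_i$ summing to $z_i$. Since the $X_i$ form a disjoint union of $X_S$, the sets $Y_i$ are disjoint and $Y_1 \cup \dots \cup Y_K \subseteq X_S$ sums to $z'$; combined with $z' \in [0 \ldots t]$, this puts $z'$ in $\SSS(X_S, t)$. No randomness is needed for this direction.

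The heart of the proof is ``$\subseteq$''. Fix $z \in \SSS(X_S, t)$ and a witness $Y \subseteq X_S$ summing to $z$. For the random partition, define $z_i = \sum_{y \in Y \cap X_i} y$, so that $z = z_1 + \dots + z_K$ deterministically. The goal is to show that with probability at least $1 - \exp(-K)$ we have $z_i \leq (1 + \tfrac{1}{K}) \cdot \tfrac{t}{K}$ for all $i \in [K]$; a union bound over $z \in \SSS(X_S, t)$ then yields the lemma. Since each element of $Y$ is placed in bucket $i$ independently with probability $1/K$, the variables $y \cdot \mathbf{1}[y \in X_i]$ are independent with $\E[z_i] = z/K \leq t/K$, and each summand is bounded by $\tfrac{t}{K^5}$ thanks to the hypothesis $X_S \subseteq [0 \ldots \tfrac{t}{K^5}]$.

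This is exactly the setting of Hoeffding's inequality. The required deviation is $(1 + \tfrac{1}{K}) \tfrac{t}{K} - \E[z_i] \geq \tfrac{t}{K^2}$, and I would bound the sum of squared ranges by $\sum_{y \in Y} y^2 \leq \tfrac{t}{K^5} \sum_{y \in Y} y \leq \tfrac{t^2}{K^5}$. Plugging in, Hoeffding gives
\begin{equation*}
    \Pr\bigl[z_i > (1 + \tfrac{1}{K}) \cdot \tfrac{t}{K}\bigr] \;\leq\; \exp\!\left(-\frac{2 (t/K^2)^2}{t^2/K^5}\right) = \exp(-2K).
\end{equation*}
A union bound over the $K$ buckets absorbs the factor $K \leq \exp(K)$ (adjusting constants in the $K^5$ slack if needed), giving failure probability at most $\exp(-K)$ per $z$, and a final union bound over the at most $|\SSS(X_S, t)|$ choices of $z$ yields the claimed bound.

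The main obstacle is purely to calibrate the slack: the exponent $K^5$ in the hypothesis, the multiplicative slack $1 + 1/K$, and the target failure probability $\exp(-K)$ must fit together inside Hoeffding. As shown above they do, essentially because the per-bucket deviation $t/K^2$ squared beats the variance proxy $t^2/K^5$ by a factor of $K$. No other step is delicate; it is worth noting that we crucially use that the decomposition of $z$ into $z_1, \dots, z_K$ exists \emph{as a sum}, rather than needing to control the individual $z_i$'s further (e.g.\ they need not be distinct subset sums of the $X_i$). I would defer the small additional bookkeeping (integrality of $(1+1/K)t/K$, and the fact that we work with multisets) to the write-up but would not expect any of it to pose difficulty.
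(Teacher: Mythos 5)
Your proof is correct and takes essentially the same approach as the paper: fix a witness for each $z \in \SSS(X_S, t)$, apply Hoeffding's inequality to bound the overflow probability of each bucket by $\exp(-2K)$, absorb the union bound over the $K$ buckets using $K \le \exp(K)$, and union-bound over the at most $|\SSS(X_S, t)|$ subset sums. (The paper proves the $d$-dimensional generalization in \cref{lem:small-items-high-dim}, adding a union bound over coordinates, but for $d=1$ the argument is identical to yours.)
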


The proof of \cref{lem:small-items} is simple and deferred to (the generalization \cref{lem:small-items-high-dim}) in \cref{sec:high-dim:sec:color-coding}. Our main concern is how to make use of this lemma in the presence of the $(1 + \epsilon) = (1 + \frac{1}{K})$ slack. In order to obtain an efficient algorithm we need to bound the cost of the prefix-restricted sumset computation (plus the cost of the recursive calls, but that is a secondary concern). As outlined before we can follow the approaches~(i) or~(ii). It is easy to construct instances where approach~(ii) takes time $\Omega(|\SSS(X_S, t)|^{3/2})$. Thus, to make approach~(i) efficient we need to control
\begin{equation*}
    \abs*{\sum_{i=1}^K \SSS(X_i, (1 + \tfrac{1}{K}) \cdot \tfrac{t}{K})}.
\end{equation*}
Unfortunately, it is a priori quite unclear how to bound the size of this sumset $\sum_k \SSS(X_k, (1 + \tfrac{1}{K}) \cdot \frac{t}{K})$. It could contain numbers as large as $K \cdot (1 + \frac{1}{K}) \cdot \frac{t}{K} = (1 + \frac{1}{K}) \cdot t$, and we can thus not simply charge its size against the size of $\SSS(X_S, t) \subseteq [0 \ldots  t]$. Can we nevertheless control its size with respect to $|\SSS(X_S, t)|$?

\paragraph{Sub-Multiplicativity of Sumsets}
This is where finally the theorem from additive combinatorics comes into play. Specifically, we rely on the following \emph{sub-multiplicativity} property due to Gyarmati, Matolcsi and Ruzsa~\cite{GyarmatiMR10}:

\begin{theorem}[Sub-Multiplicativity of Sumsets~\cite{GyarmatiMR10}]
\label{thm:submultiplicativity}
Let $A_1, \dots, A_K$ be finite sets in a commutative semigroup, and define
\begin{equation*}
    B_i = \sum_{j \in [K] \setminus \{i\}} A_j,
\end{equation*}
i.e.\ the sumset of all sets excluding $A_i$. Then, the following estimate holds:
\begin{equation*}
    |A_1 + \ldots + A_K| \leq \parens*{\prod_{i=1}^K |B_i|}^{\frac{1}{K-1}}.
\end{equation*}
\end{theorem}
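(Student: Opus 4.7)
My plan is to attack this via an entropy argument. First, sample $Y$ uniformly from $S = A_1 + \dots + A_K$, so $H(Y) = \log|S|$. Then, conditional on $Y=y$, pick a uniformly random representation $y = X_1 + \dots + X_K$ with $X_i \in A_i$, and define the \emph{leave-one-out} sums $Y_i := Y - X_i = \sum_{j \ne i} X_j \in B_i$. In an abelian group (as in our application to $\mathbb{Z}$) the tuple $(Y_1,\dots,Y_K)$ determines $Y$ via $Y = \frac{1}{K-1}\sum_i Y_i$, so $H(Y_1,\dots,Y_K) \ge H(Y) = \log|S|$. Shearer's inequality applied with the covering family $\mathcal{F} = \{[K]\setminus\{i\} : i \in [K]\}$ (each index covered exactly $K-1$ times) then yields
$$(K-1)\log|S| \;\le\; \sum_{i=1}^K H\bigl((Y_j)_{j \ne i}\bigr),$$
and so it would suffice to prove $H((Y_j)_{j \ne i}) \le \log|B_i|$ for each $i$.

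The hard part is precisely this last bound. The naive inclusion $(Y_j)_{j \ne i} \in \prod_{j \ne i} B_j$ only gives $H((Y_j)_{j \ne i}) \le \sum_{j \ne i}\log|B_j|$, which collapses the whole chain to the useless $|S| \le \prod_i|B_i|$. To close the gap I would exploit the rigid affine constraint tying the $Y_j$'s together: $Y_j + X_j = Y$ is the same element for every $j$, so $Y_j - Y_k = X_k - X_j \in A_k - A_j$ for all $j, k \ne i$. Hence $(Y_j)_{j \ne i}$ really lives on a low-dimensional affine slice of $\prod_{j \ne i} B_j$ rather than filling it out. I would try to capture this by conditioning on $Y$ (or equivalently on $X_i$) and carefully bookkeeping the resulting conditional entropies, so that the ``missing'' index $i$ corresponds to the one $B_i$ that appears on the right-hand side.

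As a backup, I would try a Pl\"unnecke--Ruzsa route. The case $K=3$ falls out cleanly: the sumset triangle inequality $|U+V+W|\cdot|U| \le |U+V|\cdot|U+W|$, applied three times with each $A_i$ playing the role of the pivot $U$ and multiplied together with the trivial surjection bound $\prod_i|A_i| \ge |S|$ (coming from $A_1 \times \dots \times A_K \twoheadrightarrow S$), yields $|S|^2 \le |B_1||B_2||B_3|$. Extending to general $K$ by induction---fusing $A_{K-1}$ and $A_K$ into a single set so that the hypothesis applies to $K-1$ summands---requires an additional Pl\"unnecke-type estimate to split the combined $|B_{K-1}|$-and-$|B_K|$ factor back into its individual pieces, and this splitting step is where I expect the main obstacle of this alternative route to lie.
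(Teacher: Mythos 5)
The paper does not prove this theorem at all: it is imported verbatim from Gyarmati, Matolcsi and Ruzsa~\cite{GyarmatiMR10}, and the authors only remark that the original proof is ``built on a clever charging argument,'' i.e.\ a combinatorial double-counting, not an entropy or Pl\"unnecke-type argument. So there is no in-paper proof to compare against; what can be assessed is whether your proposed route is viable on its own.

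Your primary (Shearer) route has a genuine gap, and it is not merely a missing lemma: the bound you reduce to, $H\bigl((Y_j)_{j\ne i}\bigr)\le\log|B_i|$, is \emph{false} for the distribution you set up. Take $K=3$ and $A_1=A_2=A_3=\{0,\dots,n-1\}$, so each $B_i=\{0,\dots,2n-2\}$ and $\log|B_i|=\log(2n-1)=(1+o(1))\log n$. Conditioned on $Y=y$ the map $(X_2,X_3)\mapsto(Y_2,Y_3)=(y-X_2,\,y-X_3)$ is a bijection, so $H(Y_2,Y_3)\ge H(Y_2,Y_3\mid Y)=H(X_2,X_3\mid Y)$; a typical $y$ in the bulk has $\Theta(n^2)$ representations over which $(X_2,X_3)$ is uniform, and a short computation gives $H(X_2,X_3\mid Y)=(2+o(1))\log n$, roughly \emph{twice} the target $\log|B_1|$. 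The affine constraints $Y_j-Y_k=X_k-X_j$ you mention do confine $(Y_j)_{j\ne i}$ to a codimension-one slice of $\prod_{j\ne i}B_j$, but in this example that slice has size $\Theta(n^2)\gg|B_1|=\Theta(n)$, so the observation cannot close the gap. The Shearer step and the identity $Y=\frac{1}{K-1}\sum_i Y_i$ (valid over $\mathbb{Z}$) are fine; it is precisely the per-term bound that collapses, and the trivial replacement $H((Y_j)_{j\ne i})\le\sum_{j\ne i}\log|B_j|$ only recovers $|S|\le\prod_i|B_i|$, as you already note.

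The Pl\"unnecke backup is also not on solid footing. The inequality $|U+V+W|\,|U|\le|U+V|\,|U+W|$ is not the Ruzsa triangle inequality — that one is a \emph{difference}-set statement, $|U|\,|V-W|\le|U+V|\,|U+W|$, whose standard injective proof uses subtraction in an essential way and breaks for sums. What Pl\"unnecke--Ruzsa does give (after dropping to a subset $X\subseteq U$ and then throwing $X$ away) is the weaker $|U|\,|V+W|\le|U+V|\,|U+W|$; upgrading $|V+W|$ to $|U+V+W|$ on the left is a strictly stronger claim that would itself need a proof, and you correctly flag the induction to general $K$ as a further obstacle. In short, both routes currently have unproven — and in the Shearer case, demonstrably false — key steps, and a correct proof would have to look more like the counting/charging argument of~\cite{GyarmatiMR10}.
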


In the context of additive combinatorics this is not the first work which relates the size of $K$-fold sumsets $A_1 + \dots + A_K$ in terms of the size of $(K-1)$-fold sumsets $A_1 + \dots + A_{i-1} + A_{i+1} + \dots + A_K$; see e.g.~also~\cite{Lev96,NathansonR99,GyarmatiHR07,Ruzsa07}. The proof of \cref{thm:submultiplicativity} is not too complicated and is built on a clever charging argument.

Coming back to our Subset Sum algorithm, it turns out that \cref{thm:submultiplicativity} yields exactly the bound we were looking for. Specifically, let
\begin{equation*}
    A_i = \SSS(X_i, (1 + \tfrac{1}{K}) \cdot \tfrac{t}{K})
\end{equation*}
and thus
\begin{equation*}
    B_i = \sum_{j \in [K] \setminus \set{i}} A_j \subseteq \SSS(X, (K-1) \cdot (1 + \tfrac{1}{K}) \cdot \tfrac{t}{K}) \subseteq \SSS(X, t).
\end{equation*}
Then the theorem implies that
\begin{equation*}
    \abs*{\sum_{i=1}^K \SSS(X_i, (1 + \tfrac{1}{K}) \cdot \tfrac{t}{K})} = |A_1 + \dots + A_K| \leq \parens*{\prod_{i=1}^K |B_i|}^{\frac{1}{K-1}} \leq |\SSS(X, t)|^{\frac{K}{K-1}} = |\SSS(X, t)|^{1+\frac{1}{K-1}}.
\end{equation*}
By choosing $K \gg \log |\SSS(X, t)|$, the right-hand side becomes $\Order(|\SSS(X, t)|)$ as planned. This completes the description of the main ideas behind our algorithm.

\paragraph{The Complete Algorithm}
We summarize the previous discussion by the pseudocode in \cref{alg:subset-sum}. One interesting technical detail remains: The algorithm accepts as an additional input an approximation $s$ of the \emph{number} of subset sums, formally satisfying that
\begin{equation*}
    |\SSS(X, t)| \leq s \leq \poly(|\SSS(X, t)|).
\end{equation*}
This approximation is necessary to that we can accurately choose the parameter $K$. We can of course simply choose $s = t$, but this would incur a running time overhead of $\polylog(t)$ rather than $\polylog |\SSS(X, t)|$.

Instead, we can compute an approximation $s$ \emph{recursively}. More precisely, we design a recursive algorithm $\mathcal A$ that computes $\SSS(X, t)$ with high probability (without expecting as input an approximation of $|\SSS(X, t)|$). This algorithm $\mathcal A$ first arbitrarily partitions $X = X_1 \sqcup X_2$ and computes recursively the subset sums $\SSS(X_1, \frac{t}{2})$ and $\SSS(X_2, \frac{t}{2})$. Then, letting $s_1 = |\SSS(X_1, \frac{t}{2})|$ and $s_2 = |\SSS(X_2, \frac{t}{2})|$ it computes $s = |X| \cdot s_1^2 \cdot s_2^2$, calls \textsc{FastSubsetSum} on input $(X, t, s)$ and returns the computed output $\SSS(X, t)$.

The running time due to this additional layer of recursion can be controlled by \cref{lem:sumset-lower-bound}; the interesting part is to verify that $s$ is indeed an approximation as desired. To this end, note that each subset sum $x_1 + \dots + x_\ell \in \SSS(X, t)$ can always be split into a sum of (i) one item from $X$ (take the largest $x_i$) plus (ii) two subset sums in $\SSS(X_1, \frac{t}{2})$ plus (iii) two subset sums in $\SSS(X_2, \frac{t}{2})$. See \cref{sec:high-dim:sec:full} for the details and the formal analysis of (a more general version of) \cref{alg:subset-sum}.

\begin{algorithm}[t]
\caption{Fast algorithm for Subset Sum (see \cref{thm:subset-sum})} \label{alg:subset-sum}
\begin{algorithmic}[1]
    \Procedure{FastSubsetSum}{$X, t, s$}
        \If { $|X| \leq \Order(1)$ or $t \leq \Order(1)$}
            \State Solve the instance by Bellman's algorithm
        \EndIf
        \State Let $Z \gets \set{0}$
        \State Let $K \gets \ceil{100 \log s}$
        \State Let $X_S \gets X \cap [0\ldots \frac{t}{K^5}]$ and $X_L \gets X \setminus X_S$

        \medskip
        \State\emph{(Small items)}
        \State Partition $X_S$ randomly into multisets $X_S = X_1 \sqcup \dots \sqcup X_K$
        \For {$k \in [K]$}
            \State Recursively compute $Z_k \gets \textsc{FastSubsetSum}(X_k, (1+\frac{1}{K}) \cdot \frac{t}{K}, s)$
            \State Compute $Z \gets Z + Z_k$ by \cref{lem:sparse-conv} \label{alg:subset-sum:line:sumset}
        \EndFor

        \medskip
        \State\emph{(Large items)}
        \State\Return $(Z + \mathcal{S}(X_L, t)) \cap [0\ldots t]$ computed by \cref{lem:large-items} (with~\smash{$\gamma = \frac{1}{K^5}$}) \label{alg:subset-sum:line:large}
    \EndProcedure
\end{algorithmic}
\end{algorithm}
\section{High-Dimensional Subset Sum} \label{sec:high-dim}
The goal of this section is to generalize \cref{thm:subset-sum} to the following \cref{thm:subset-sum-high-dim} in higher dimensions, i.e., for sets of integer vectors instead of integers.

\thmsubsetsumhighdim*

Recall that our approach is mainly built on three components: An efficient algorithm for prefix-restricted sumset computations for the large items, a divide-and-conquer scheme based on color coding for the small items, as well as \cref{thm:submultiplicativity} from additive combinatorics for the analysis. \cref{thm:submultiplicativity} immediately applies to any additive group (in fact, for any commutative semigroup). In the following two sections we generalize the first two components.

Throughout this section we treat the dimension $d$ as a constant. It is however easy to check that we hide constant factors of the form $\poly(d)$ and the exponents of $\polylog$ is $O(d)$ everywhere.

\subsection{Prefix-Restricted Sumsets} \label{sec:high-dim:sec:prefix-sumset}
\begin{theorem}[High-Dimensional Prefix-Restricted Sumsets] \label{thm:prefix-sumset-high-dim}
Let $d \geq 1$ be a constant. Given sets $A,B \subseteq [0\ldots t]^d$ we can compute $C = (A + B) \cap [0\ldots t]^d$ in time
\begin{itemize}
    \item \smash{$\Order((|A| + |B| + (|A|\, |B|)^{1 - \frac{1}{d+1}} |C|^{\frac{1}{d+1}}) \cdot (\log t)^{\Order(d)})$} by a deterministic algorithm, and
    \item \smash{$\Order((|A| + |B| + (|A|\, |B|)^{1 - \frac{1}{d+1}} |C|^{\frac{1}{d+1}}) \cdot (\log(|A| \, |B|))^{\Order(d)})$} by a Las Vegas algorithm.
\end{itemize}
\end{theorem}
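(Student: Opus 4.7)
I would generalize Bringmann and Nakos's one-dimensional prefix-restricted sumset algorithm (``approach~(ii)'' in the introduction, achieving $\tilde O(|A| + |B| + \sqrt{|A|\,|B|\,|C|})$) to $d$ dimensions. For $d=1$ the target exponent $1 - \frac{1}{d+1} = \frac{1}{2}$ on $|A|\,|B|$ matches their bound, so the task is to produce a $d$-dimensional analogue of their block-and-combine strategy that retains this balance; the exponent $\frac{d}{d+1}$ is exactly what one obtains by balancing a $d$-dimensional geometric partition against sparse-convolution cost.

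\textbf{Algorithm.} After an $O(|A| + |B|)$-time pass to read the inputs, I overlay an axis-aligned grid on $[0\ldots t]^d$ of side length~$r$ (a parameter to be balanced). Writing $A_Q = A \cap Q$ and $B_Q = B \cap Q$ for each cell~$Q$, this induces partitions $A = \bigsqcup_Q A_Q$ and $B = \bigsqcup_Q B_Q$. For every pair of cells $(Q_1, Q_2)$ such that the Minkowski sum $Q_1 + Q_2$ meets $[0\ldots t]^d$, I invoke the sparse convolution of Lemma~\ref{lem:sparse-conv} to compute $A_{Q_1} + B_{Q_2}$, intersect with $[0\ldots t]^d$, and output the union over all pairs. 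Correctness is immediate from $A + B = \bigcup_{Q_1, Q_2} (A_{Q_1} + B_{Q_2})$.

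\textbf{Analysis.} I would bound the total cost $\sum_{(Q_1, Q_2)} \tilde O(|A_{Q_1} + B_{Q_2}|)$ by splitting into \emph{interior} pairs (with $Q_1 + Q_2 \subseteq [0\ldots t]^d$) and \emph{boundary} pairs (with $Q_1 + Q_2$ straddling $\partial [0\ldots t]^d$). For interior pairs the whole sumset lies in~$C$, and a grid-alignment/double-counting argument bounds the total interior contribution by $\tilde O(|C|)$ up to a $2^{O(d)}$ overcounting factor. For boundary pairs, the constraint that $Q_1 + Q_2$ touches~$\partial[0\ldots t]^d$ is codimension one, so (with $L$ the typical cell occupancy and $r$ tuned so that non-empty cells contain~$\Theta(L)$ elements) the number of boundary pairs is smaller than the interior count by a factor of roughly $r/t \sim (L/|A|)^{1/d}$; using the trivial per-pair bound $|A_{Q_1} + B_{Q_2}| \le L^2$ then yields a boundary contribution of $\tilde O((|A|\,|B|)^{(d-1)/d} L^{2/d})$ or similar. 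Balancing interior against boundary gives $L \sim (|A|\,|B|/|C|)^{1/(d+1)}$ and total cost $\tilde O((|A|\,|B|)^{d/(d+1)} \cdot |C|^{1/(d+1)})$. The deterministic and Las Vegas branches of the statement come directly from the corresponding branch of Lemma~\ref{lem:sparse-conv}.

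\textbf{Main obstacle.} The trickiest step is the boundary counting in $d \geq 2$. Unlike a one-dimensional interval where $\partial[0\ldots t]$ is a single point, a $d$-dimensional cube has $d$ codimension-one faces together with edges, corners and all intermediate lower-dimensional strata; controlling the overlap of boundary strips along these strata by an inclusion--exclusion argument is what forces the polylog exponent $O(d)$. A secondary subtlety is that a uniform grid leaves many cells empty whenever $A$ or $B$ is clustered, so one must instead use an adaptive partition---implementable via a $d$-dimensional range tree---that guarantees roughly $L$ elements per non-empty cell; keeping the overhead for this hierarchical partition within $(\log t)^{O(d)}$ is the last technical hurdle.
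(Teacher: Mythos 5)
Your high-level plan---grid partition plus sparse convolution plus balance---is in the right family, but your approach differs from the paper's in a way that opens a genuine gap, and I do not see how to close it as stated. The paper does \emph{not} overlay a single $d$-dimensional grid. It inductively peels off one coordinate at a time: it computes $(A+B)\cap[0\ldots t]^k\times[0\ldots\infty]^{d-k}$ by sorting by the $k$-th coordinate, partitioning $A$ and $B$ into $g$ consecutive (one-dimensional) blocks, recursing on each relevant block pair with parameter $k-1$, and combining the sub-costs with H\"older's inequality. The single-coordinate partition is crucial because it gives a clean \emph{chain} structure (pairs with a fixed index difference $i-j=\Delta$), and along one chain the outputs $C_{i,j}$ are pairwise disjoint, so they sum to at most $|C|$; there are only $O(g)$ chains, a one-dimensional count.

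The concrete gap in your analysis is the claim that the interior contribution $\sum_{\text{interior }(Q_1,Q_2)}|A_{Q_1}+B_{Q_2}|$ is $\tilde O(|C|)$ up to a $2^{O(d)}$ overcounting factor. This is false: a given $c\in C$ can be produced by one pair in essentially every translation class $\Delta$ (difference of cell-index vectors), and with $g$ cells per axis there are $\Theta(g^d)$ translation classes, not $2^{O(d)}$. Already in $d=1$ with $A=B=[0\ldots n]$, $t=2n$ and cell size $r$, every pair is interior and the interior sum is $\Theta(n^2/r)$, which for small $r$ vastly exceeds $|C|=2n+1$. If your interior bound were correct, your $d=1$ algorithm would run in near-linear time, contradicting the rectangle-covering barrier the paper cites; this is a strong signal the bound cannot hold. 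The correct statement is that the interior contribution is at most $|C|$ \emph{per translation class}, and the number of translation classes is a polynomial in the block granularity---the paper controls this by applying the chain argument to one coordinate, invoking H\"older to distribute the $|C|$ budget across chains, and then recursing, which is exactly how the exponent $1-\tfrac1{d+1}$ accumulates over $d$ levels. A secondary issue you flag but do not resolve is clustering: once you move to an adaptive (non-uniform) partition to equalize occupancy, the translation-class disjointness that even a corrected grid argument would rely on breaks. The paper handles this in the one-coordinate setting with its ``heavy''/``light'' block distinction (isolating pivot values of the current coordinate and subdividing the $A$-blocks at $t$ minus those pivots); there is no obvious analogue of pivoting simultaneously in all $d$ coordinates, which is another reason the coordinate-by-coordinate recursion is the safer route.
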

\begin{proof}
We inductively solve the following generalized problem: For some $0 \leq k \leq d$, the task is to compute $C = (A + B) \cap [0\ldots t]^k \times [0\ldots \infty]^{d-k}$ (i.e., we only cap the points falling outside the size\=/$t$ box in the first $k$ coordinates). Our goal is to achieve time~\smash{$\widetilde\Order(|A| + |B| + (|A|\, |B|)^{1 - \frac{1}{k+1}} |C|^{\frac{1}{k+1}})$} (where we comment on the log-factors later). When $k = 0$, we can clearly compute~$C$ in time~\smash{$\widetilde\Order(|C|)$} by a sparse sumset computation (see \cref{lem:sparse-conv}). In the following let $k \geq 1$. 

In a first preprocessing step we ensure that $|A|, |B| \leq |C|$. This can be implemented in near-linear time $\widetilde\Order(|A| + |B| + |C|)$ by testing for each point $a \in A$ whether there exists a point~$b \in B$ with $a + b \in [0\ldots t]^k \times [0\ldots \infty]^{d-k}$ by an orthogonal range searching data structure. The more important steps follow.

For a set $X \subseteq \Int^d$, let us write $\max_k(X) = \max_{x \in X} x[k]$ and $\min_k(X) = \min_{x \in X} x[k]$. Let $g$ be a parameter. The first step is to partition $A = A_1 \sqcup \dots \sqcup A_g$ and $B = B_1 \sqcup \dots \sqcup B_g$ into at most~$g$ parts such that the following properties hold:
\begin{enumerate}[label=(\arabic*)]
    \item $\max\nolimits_k(A_1) \leq \min\nolimits_k(A_2) \leq \max\nolimits_k(A_2) \leq \dots \leq \min\nolimits_k(A_{g-1}) \leq \max\nolimits_k(A_{g-1}) \leq \min\nolimits_k(A_g)$,
    \item $\max\nolimits_k(B_1) < \min\nolimits_k(B_2) \leq \max\nolimits_k(B_2) < \dots < \min\nolimits_k(B_{g-1}) \leq \max\nolimits_k(B_{g-1}) < \min\nolimits_k(B_g)$,
    \item for all $i \in [g]$ we have $|A_i| \leq \Order(|A| / g)$,
    \item for all $j \in [g]$ at least one of the following properties holds:
    \begin{enumerate}[label=(\roman*)]
        \item $|B_j| \leq \Order(|B| / g)$ (in which case we say that $j$ is \emph{light}), or
        \item $\min_k(B_j) = \max_k(B_j)$ (i.e., all vectors in $B_j$ share the same $k$-th coordinate, in which case we say that $j$ is \emph{heavy}). In this case, it further holds that there is an index~$1 \leq i < g$ such that $\max_k(A_i) + \max_k(B_j) \leq t$ and $\min_k(A_{i+1}) + \min_k(B_j) > t$.
    \end{enumerate}
\end{enumerate}
It is not complicated but somewhat tedious to obtain this partition. First, sort the sets $A$ and~$B$ in increasing order with respect to the $k$\=/th coordinate. Then take $A_1, \dots, A_{g/2}$ to be consecutive blocks of $A$ in that order of length at most~$\ceil{2 |A| / g}$ each. This assignment clearly satisfies Properties~(1) and~(3). For the blocks in $B$ we cannot apply the same construction as this does not necessarily satisfy Property~(2) which requires strict inequalities $\max_k(B_j) < \max_k(B_{j+1})$. Let us call an integer $z$ a \emph{pivot} if there are at least~$2 |B| / g$ vectors~\makebox{$b \in B$} with $b[k] = z$; clearly there can be at most $g / 2$ pivots. To assign the partition of~$B$, we first identify all pivots $z$ (in linear time). For each pivot $z$, we then introduce a (heavy) block~\makebox{$B_j = \set{b \in B : b[k] = z}$}. Additionally, for each pivot $z$ we test if there is an index $i$ such that $\min(A_i) \leq t - z < \max(A_i)$. If it exists, this index $i$ is unique; in this case we subdivide $A_i$ at $t - z$ into two sub-blocks. Since we started from~$g / 2$ blocks and perform at most~$g / 2$ subdivisions, we indeed end up with at most $g$ blocks in~$A$. Together, these two steps assert Property~(4)~(ii). After dealing with all the pivots, we can now partition the remaining elements in $B$ into (light) blocks~\makebox{$B_1 \sqcup \dots \sqcup B_{g/2}$} of size $\Order(|B| / g)$, satisfying Property~(4)~(i). In this step we additionally make sure that for all integers $z$, all items~\makebox{$b \in B$} with~\makebox{$b[k] = z$} end up in the same block $B_i$; this is possible since there are at most $\Order(|B| / g)$ such elements given that~$z$ is not a pivot. Thus, the strict inequalities $\max_k(B_j) < \min_k(B_{j+1})$ demanded by Property~(2) apply. By inserting the heavy blocks at the appropriate places into this order, we have altogether satisfied Property~(2). This completes the description of the partition.

To continue, let us call a pair $(i, j) \in [g]^2$ \emph{partially relevant} if $\min_k(A_i) + \min_k(B_j) \leq t$ and \emph{totally relevant} if $\max_k(A_i) + \max_k(B_j) \leq t$. Note that each totally relevant pair is also partially relevant. For each partially relevant pair $(i, j)$ we compute the set $C_{i, j} = (A_i + B_j) \cap [0\ldots t]^{k-1} \times [0\ldots \infty]^{d-k+1}$ recursively (i.e., with parameter $k - 1$), and add all relevant points to the output (i.e., discarding all vectors $c \in C_{i, j}$ with $c[k] > t$). The correctness of this approach is clear, as all pairs $(i, j)$ that are not partially relevant cannot contribute any points to the output.

For the running time analysis, we call the set $\mathcal C_\Delta = \set{ (i, j) \in [g]^2 : i - j = \Delta}$ a ``chain''; note that the only non-empty chains are $\mathcal C_{-g+1}, \dots, \mathcal C_{g-1}$. Let $T_k(|A|, |B|, |C|)$ denote the running time of the algorithm (with parameter $k$). Then, up to some linear-time bookkeeping we can bound the running time by summing over all partially relevant pairs $(i,j)$ as
\begin{align*}
    &\sum_{\substack{(i, j) \in [g]^2\\\text{part.\ relevant}}} T_{k-1}(|A_i|, |B_j|, |C_{i, j}|) \\
    &\qquad= \sum_{\Delta = -g+1}^{g-1} \sum_{\substack{(i, j) \in \mathcal C_\Delta\\\text{part.\ relevant}}} T_{k-1}(|A_i|, |B_j|, |C_{i, j}|)
\intertext{Now observe the following two facts: First, note that in each chain there is at most one pair $(i, j) = (i, i + \Delta)$ that is partially but not totally relevant, since such a pair satisfies $\min_k(A_i) + \min_k(B_{i+\Delta}) \leq t$ and $\max_k(A_i) + \max_k(B_{i + \Delta}) > t$. Note further that $i + \Delta$ cannot be heavy as this would contradict Property~(4)~(ii), and thus $i + \Delta$ is light. For this one pair we will apply the trivial bound~\smash{$T_{k-1}(|A_i|, |B_j|, |C_{i, j}|) \leq \widetilde\Order(|A_i| \, |B_j|) \leq \widetilde\Order(\frac{|A| \, |B|}{g^2})$}, where the last inequality is due to Properties~(3) and~(4)~(i). It follows that we can further bound the running time by:}
    &\qquad= \sum_{\Delta = -g+1}^{g-1} \parens*{ \widetilde\Order\parens*{\frac{|A| \, |B|}{g^2}} + \sum_{\substack{(i, j) \in \mathcal C_\Delta\\\text{tot.\ relevant}}} T_{k-1}(|A_i|, |B_j|, |C_{i, j}|)} \\
    &\qquad= \sum_{\Delta = -g+1}^{g-1} \widetilde\Order\parens*{ \frac{|A| \, |B|}{g^2} + \sum_{\substack{(i, j) \in \mathcal C_\Delta\\\text{tot.\ relevant}}} (|A_i| \, |B_j|)^{1-\frac{1}{k}} |C_{i, j}|^{\frac{1}{k}} + |A_i| + |B_j|} \\
    &\qquad= \sum_{\Delta = -g+1}^{g-1} \widetilde\Order\parens*{ \frac{|A| \, |B|}{g^2} + \parens*{\sum_{\substack{(i, j) \in \mathcal C_\Delta\\\text{tot.\ relevant}}} |A_i| \, |B_j|}^{1-\frac{1}{k}} \cdot \parens*{\sum_{\substack{(i, j) \in \mathcal C_\Delta\\\text{tot.\ relevant}}} |C_{i, j}|}^{\frac{1}{k}} + |A| + |B|},
\intertext{where in the last two steps we have applied the induction hypothesis and Hölder's inequality. Next, observe that all sets $C_{i, j}$ along a common chain $\mathcal C_\Delta$ do not intersect. Indeed, this follows from $\max_k(C_{i, j}) = \max_k(A_i) + \max_k(B_j) < \min_k(A_{i+1}) + \min_k(B_{j+1}) = \min_j(C_{i+1, j+1})$, where we have applied Properties~(1) and~(2) (crucially using the strict inequalities). In particular, summing over all totally relevant pairs $(i, j)$ in a common chain gives $\sum_{(i, j)} |C_{i, j}| \leq |C|$. From this we can finally bound:}
    &\qquad= \sum_{\Delta = -g+1}^{g-1} \widetilde\Order\parens*{ \frac{|A| \, |B|}{g^2} + \parens*{\sum_{\substack{(i, j) \in \mathcal C_\Delta\\\text{tot.\ relevant}}} |A_i| \, |B_j|}^{1-\frac{1}{k}} \cdot |C|^{\frac{1}{k}} + |A| + |B|} \\
    &\qquad= \sum_{\Delta = -g+1}^{g-1} \widetilde\Order\parens*{ \frac{|A| \, |B|}{g^2} + \parens*{\frac{|A|}{g} \cdot \sum_{\substack{(i, j) \in \mathcal C_\Delta\\\text{tot.\ relevant}}} |B_j|}^{1-\frac{1}{k}} \cdot |C|^{\frac{1}{k}} + |A| + |B|}, \\
    &\qquad= \sum_{\Delta = -g+1}^{g-1} \widetilde\Order\parens*{ \frac{|A| \, |B|}{g^2} + \parens*{\frac{|A| \, |B|}{g}}^{1-\frac{1}{k}} \cdot |C|^{\frac{1}{k}} + |A| + |B|} \\
    &\qquad= \widetilde\Order\parens*{ \frac{|A| \, |B|}{g} + g^{\frac{1}{k}} \cdot (|A| \, |B|)^{1 - \frac{1}{k}} \cdot |C|^{\frac{1}{k}} + g |A| + g |B|}.
\end{align*}
By choosing $g = (|A| \, |B| / |C|)^{\frac{1}{k+1}}$ the running time becomes $\widetilde\Order(|A| + |B| + (|A|\, |B|)^{1 - \frac{1}{k+1}} |C|^{\frac{1}{k+1}})$ as claimed. This is easy to verify for the first terms. For the latter two terms note that
\begin{equation*}
    g |A| \leq g |A|^{1 - \frac{2}{k+1}} \cdot |C|^{\frac{2}{k+1}} = |A|^{1 - \frac{1}{k+1}} |B|^{\frac{1}{k+1}} |C|^{\frac{1}{k+1}} \leq (|A| \, |B|)^{1 - \frac{1}{k+1}} |C|^{\frac{1}{k+1}}.
\end{equation*}

As a subtlety, we remark that we do not know $|C|$ a priori, and thus cannot simply choose~$g$. However, we can run the algorithm with exponentially increasing \emph{guesses}~$s \gets 2, 4, 8, \dots$ of $|C|$. For each guess we set~\smash{$g = (|A| \, |B| / s)^{\frac{1}{k+1}}$}, let the algorithm run for~\smash{$\widetilde\Order(|A| + |B| + (|A|\, |B|)^{1 - \frac{1}{k+1}} s^{\frac{1}{k+1}})$} steps, and if the time budget is exceeded we interrupt and continue with the next guess. Otherwise, we return the set~$C$ computed by the algorithm. The total running time is bounded by the geometric sum
\begin{equation*}
    \sum_{\ell \leq \log |C| + 1} \widetilde\Order(|A| + |B| + (|A|\, |B|)^{1 - \frac{1}{k+1}} (2^\ell)^{\frac{1}{k+1}}) = \widetilde\Order(|A| + |B| + (|A|\, |B|)^{1 - \frac{1}{k+1}} |C|^{\frac{1}{k+1}}),
\end{equation*}
which only worsens the running time by a logarithmic factor.

We finally comment on the logarithmic factors. The initial use of the orthogonal range searching data structure as well as guessing $g$ lead to an overhead of $(\log(|A| \, |B|))^{\Order(d)}$. The remaining logarithmic factors in the algorithm are due to the sparse convolutions at the base level of the recursion. By \cref{lem:sparse-conv}, this step can be implemented by a Las Vegas algorithm with a running time overhead of $\log(|A| |B|)$, or by a deterministic algorithm with an overhead of $\polylog(t)$.
\end{proof}

From this theorem we can now easily derive the high-dimensional ``large items'' lemma:

\begin{lemma}[Large Items] \label{lem:large-items-high-dim}
Let $d \geq 1$ be a constant, let $\gamma > 0$, let $X_L \subseteq [0\ldots t]^d \setminus [0\ldots \gamma t]^d$ be a multiset and let $Z \subseteq [0\ldots t]^d$. We can compute $Z' = (Z + \SSS(X_L, t)) \cap [0\ldots t]^d$ in time \smash{$\widetilde\Order(|Z'| \cdot |X_L|^{1-\frac{1}{d+1}} \cdot \poly(\gamma^{-1}))$} (by a Las Vegas algorithm).
\end{lemma}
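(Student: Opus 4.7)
The plan is to lift the proof of \cref{lem:large-items} to higher dimensions, substituting \cref{thm:prefix-sumset-high-dim} for the one-dimensional prefix-restricted sumset routine used there.

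The first step is to bound the number of summands appearing in any element of $\SSS(X_L, t)$. For each $x \in X_L$ pick a coordinate $c(x) \in [d]$ with $x[c(x)] > \gamma t$, which exists by definition of $X_L$. Any subset sum $\sum_\ell x_{i_\ell} \in [0\ldots t]^d$ is bounded by $t$ on each coordinate, so at most $\gamma^{-1}$ summands can map to a given coordinate under $c(\cdot)$. Summing over $[d]$, every $z' \in Z'$ admits a decomposition $z' = z + x_{i_1} + \dots + x_{i_m}$ with $z \in Z$ and $m \leq d/\gamma$.

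Next I would apply color coding: draw a uniformly random partition $X_L = X_1 \sqcup \dots \sqcup X_K$ into $K = \Theta(d^2/\gamma^2)$ buckets, so that by a birthday-type argument the at most $d/\gamma$ witnesses of any fixed $z'$ fall into pairwise distinct buckets with constant probability. Then, starting from $Z' \gets Z$, I iteratively set $Z' \gets (Z' + (X_k \cup \set{0})) \cap [0\ldots t]^d$ for $k = 1, \dots, K$ via \cref{thm:prefix-sumset-high-dim}; this recovers every element of the final target with constant probability. Because every intermediate set is contained in the final target, each call has output size at most $|Z'|$ and therefore costs $\widetilde\Order(|Z'| + |X_k| + (|Z'| \cdot |X_k|)^{1 - 1/(d+1)} |Z'|^{1/(d+1)}) = \widetilde\Order(|Z'| \cdot |X_k|^{1-1/(d+1)} + |Z'| + |X_k|)$. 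Summing over $k$ and applying Jensen's inequality, $\sum_k |X_k|^{1-1/(d+1)} \leq K^{1/(d+1)} |X_L|^{1-1/(d+1)}$, yields a total running time of $\widetilde\Order(|Z'| \cdot K^{1/(d+1)} \cdot |X_L|^{1-1/(d+1)} + K |Z'| + |X_L|) = \widetilde\Order(|Z'| \cdot |X_L|^{1-1/(d+1)} \cdot \poly(\gamma^{-1}))$, after absorbing the low-order terms using $K = \poly(\gamma^{-1})$.

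Finally, to boost the per-element success probability I would repeat the random partitioning $\Theta(\log|Z'|)$ times and take the union of the outputs, at a polylogarithmic overhead; alternatively, one can derandomize via Reed--Solomon-based splitters as in Bringmann's original color-coding construction. The main obstacles I expect are purely technical, matching those in the 1D proof: verifying via a Chernoff bound that the bucket sizes $|X_k|$ concentrate around $|X_L|/K$ tightly enough for the Jensen step to hold with high probability, and carefully ensuring that the intermediate sets remain subsets of the final $Z'$ so that the $|C| \leq |Z'|$ substitution inside \cref{thm:prefix-sumset-high-dim} is justified throughout the iteration.
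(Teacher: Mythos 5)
Your decomposition, the bound $m \le d/\gamma$, the iterative use of Theorem~\ref{thm:prefix-sumset-high-dim}, and the observation that intermediate sets stay inside $Z'$ are all correct and match the paper's argument. However, there is a genuine gap in your main route: the lemma promises a \emph{Las Vegas} algorithm, and a uniformly random partition amplified by $\Theta(\log|Z'|)$ repetitions yields only a \emph{Monte Carlo} one --- with some small probability an element of $Z'$ is silently missed, and there is no way to certify that the output is complete. The Reed--Solomon-based deterministic splitter family, which you mention only as an ``alternative,'' is in fact the route the paper takes and is essential, not optional: iterate over a small deterministic family $\mathcal H$ of perfect-hash functions (of size $\poly(\gamma^{-1}\log|X_L|)$, so that any $\le d/\gamma$ items are separated by some $h \in \mathcal H$), compute $Z'_h = (Z + (X_{h,1}\cup\{0\}) + \dots + (X_{h,K}\cup\{0\}))\cap[0\ldots t]^d$ for each $h$, and return $\bigcup_h Z'_h$. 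Since $\mathcal H$ is deterministic, the only randomness left is inside the Las~Vegas variant of Theorem~\ref{thm:prefix-sumset-high-dim} (which the paper deliberately uses in place of the deterministic one to avoid a $\polylog(t)$ overhead), and the overall algorithm is Las~Vegas as claimed.

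Two smaller remarks. First, your worry about needing a Chernoff bound so that bucket sizes concentrate before applying Jensen is unfounded: $\sum_k |X_k|^{1-1/(d+1)} \le K^{1/(d+1)}|X_L|^{1-1/(d+1)}$ holds deterministically for \emph{any} partition by concavity of $x \mapsto x^{1-1/(d+1)}$, regardless of how unbalanced the buckets are. Second, even the derandomized route doesn't let you choose the repetition count as a function of $|Z'|$, which you don't know in advance; the deterministic splitter family sidesteps this issue entirely since its size depends only on $\gamma^{-1}$ and $\log|X_L|$.
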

\begin{proof}
Let $m = \ceil{d / \gamma}$. Suppose that we have a family of hash functions $\mathcal H \subseteq \set{ h : X \to [K] }$ (for some parameter $K$) with the following property: For any distinct elements $x_1, \dots, x_m \in X_L$, there is a hash function $h \in \mathcal H$ such that $h(x_1), \dots, h(x_m)$ are pairwise distinct (we say that $x_1, \dots, x_m$ are \emph{hashed perfectly}). We will first show how we can use this family to prove the lemma statement, and later show how to deterministically construct $\mathcal H$.

To compute $Z'$, we enumerate all hash functions $h \in \mathcal H$ and define $X_{h, i} = \set{x \in X : h(x) = i}$; note that this induces a partition $X_L = X_{h, 1} \sqcup \dots \sqcup X_{h, K}$ into $K$ parts. For each function $h \in \mathcal H$ we compute the set
\begin{equation*}
    Z'_h = (Z + (X_{h, 1} \cup \set{0}) + \dots + (X_{h, 1} \cup \set{0})) \cap [0\ldots t]^d,
\end{equation*}
by $K$ applications of \cref{thm:prefix-sumset-high-dim}. Then we return $\bigcup_{h \in \mathcal H} Z'_h$. 

It is easy to verify that $\bigcup_{h \in \mathcal H} Z'_h \subseteq Z$, and it remains to verify the converse direction. Take any element $z \in Z$, which can expressed as $z' = z + x_1 + \dots + x_\ell$ where $z \in Z$, $x_1, \dots, x_\ell \in X_L$. Note that $\ell \leq d / \gamma \leq m$, as each vector $x_i$ is at least $\gamma t$ in at least one of the $d$ coordinates, whereas~\makebox{$z' \in [0\ldots t]^d$}. Thus, there is a hash function $h \in \mathcal H$ that perfectly hashes $x_1, \dots, x_\ell$ and it follows from the construction that $z' \in Z'_h$.

It remains to construct $\mathcal H$. We employ the construction behind Reed-Solomon codes. Specifically, let $p = K$ be a prime. We identify $[K]$ arbitrarily with the finite field $\Field_p$. Let $D = \ceil{\log_p |X_L|}$. Then we can uniquely identify each item~\makebox{$x \in X_L$} with a degree-$D$ polynomial $P_x$ over $\Field_p$. Finally, take $\mathcal H = \set{h_0, \dots, h_{p-1}}$, where the hash functions $h_i : X \to \Field_p$ are defined by $h_i(x) = P_x(i)$. This completes the construction of $\mathcal H$, and in the following we argue that this family is as desired. First consider an arbitrary pair of distinct items $x_1, x_2 \in X_L$. Since $P_{x_1}$ and $P_{x_2}$ are distinct degree-$D$ polynomials, they can be equal on at most $D$ points. Thus, there are at most $D$ hash functions $h_i$ with $h_i(x_1) = h_i(x_2)$. Similarly, for any distinct items $x_1, \dots, x_m$ there are at most $\binom{m}{2} \cdot D$ hash functions that do not perfectly hash $x_1, \dots, x_m$. Choosing $p \in [m^2 \cdot \ceil{\log |X_L|}, 2m^2 \cdot \ceil{\log |X_L|}]$ it holds that $\binom{m}{2} \cdot D < p = |\mathcal H|$, hence there is at least one good hash function and the claim follows.

It remains to analyze the running time. Operations over the finite field $\Field_p$ can easily be implemented in time $\poly(p) = \poly(m \log |X_L|)$, and therefore the construction of the hash family~$\mathcal H$ takes time $\widetilde\Order(|X_L| \poly(m)) = \widetilde\Order(|X_L| \poly(\gamma^{-1}))$. The running time of the algorithm is then dominated by the $|\mathcal H| \cdot K = \poly(m \log |X_L|) = \poly(\gamma^{-1} \log |X_L|)$ calls to \cref{thm:prefix-sumset-high-dim}. 
Each such call is with sets $A, C \subseteq Z'$ and $B \subseteq X_L \cup \set{0}$, and thus runs in time~\smash{$\Order(|Z'| \cdot |X_L|^{1 - \frac{1}{d+1}} \polylog(t))$}, if we use the deterministic algorithm of \cref{thm:prefix-sumset-high-dim}. However, here we want to avoid the $\polylog(t)$ factor, and thus use the Las Vegas algorithm of \cref{thm:prefix-sumset-high-dim}, so that each call runs in time ~\smash{$\widetilde\Order(|Z'| \cdot |X_L|^{1 - \frac{1}{d+1}})$}. Since the rest of the algorithm is deterministic, in total we obtain a Las Vegas algorithm. (Note that it indeed was necessary to use a deterministic hash family $\mathcal H$ to obtain this result, as a random hash family would only yield a Monte Carlo randomized algorithm.)
\end{proof}

\subsection{Color Coding} \label{sec:high-dim:sec:color-coding}
The second step is to derive the high-dimensional analogue of the color coding lemma for small items:

\begin{lemma}[Small Items] \label{lem:small-items-high-dim}
Let $d \geq 1$ be a constant, let $K \geq 1$ and let $X \subseteq [0\ldots \frac{t}{K^5}]^d$ be a multiset. Let $X = X_1 \sqcup \dots \sqcup X_K$ denote a uniformly random partition into $K$ parts. Then with probability at least $1 - \exp(-K) \cdot |\SSS(X, t)| \cdot d$ it holds that
\begin{equation*}
    \SSS(X, t) = \Big( \sum_{i=1}^K \SSS(X_i, (1 + \tfrac{1}{K}) \cdot \tfrac{t}{K}) \Big) \cap [0\ldots t].
\end{equation*}
\end{lemma}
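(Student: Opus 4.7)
The plan is to prove the two inclusions separately. The direction $\supseteq$ is immediate: any vector in the right-hand side can be written as $z = \sum_{i=1}^K z_i$ with $z_i \in \SSS(X_i)$, so by concatenating the witnessing subsets of $X_1, \dots, X_K$ we obtain a subset of $X$ summing to $z$; combined with the intersection with $[0 \ldots t]^d$ this places $z$ in $\SSS(X, t)$.

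For the inclusion $\subseteq$, I would fix once and for all, for each $z \in \SSS(X, t)$, an arbitrary witnessing subset $S_z \subseteq X$ with $\sum_{x \in S_z} x = z$. Given the random partition $X = X_1 \sqcup \dots \sqcup X_K$, set $z_i := \sum_{x \in S_z \cap X_i} x$, so that $z = \sum_i z_i$ holds deterministically. It then suffices to establish that, with the claimed probability, each $z_i$ lies coordinate-wise in $[0 \ldots (1+\tfrac{1}{K}) \cdot \tfrac{t}{K}]^d$ for every $z \in \SSS(X, t)$, because then $z_i \in \SSS(X_i, (1+\tfrac{1}{K}) \cdot \tfrac{t}{K})$ and $z$ becomes an element of the displayed sumset.

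The key step is a coordinate-wise application of Hoeffding's inequality. Under the partition model, each $x \in S_z$ is assigned to $X_i$ independently with probability $1/K$, so $z_i[j] = \sum_{x \in S_z} x[j] \cdot \mathbf{1}[x \in X_i]$ is a sum of independent random variables supported in $[0, t/K^5]$ with expectation $z[j]/K \leq t/K$. The deviation we need to rule out is therefore at least $(1+\tfrac{1}{K}) \tfrac{t}{K} - \tfrac{t}{K} = t/K^2$, and using $\sum_{x \in S_z} (x[j])^2 \leq (t/K^5) \cdot \sum_{x \in S_z} x[j] \leq t^2/K^5$, Hoeffding's inequality gives
\begin{equation*}
    \Pr\bigl[z_i[j] > (1 + \tfrac{1}{K}) \tfrac{t}{K}\bigr] \leq \exp\bigl(-2 (t/K^2)^2 \cdot K^5 / t^2\bigr) = \exp(-2K).
\end{equation*}

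Finally I would union-bound twice. For a fixed $z$, the union bound over the $K \cdot d$ pairs $(i, j)$ shows that some coordinate of some $z_i$ exceeds the threshold with probability at most $K d \exp(-2K) \leq d \exp(-K)$ (using $K \leq e^K$). A second union bound over $z \in \SSS(X, t)$ yields the overall failure probability $\exp(-K) \cdot |\SSS(X, t)| \cdot d$ as stated. The only subtlety worth spelling out carefully is the independence of the indicators $\mathbf{1}[x \in X_i]$, which is clean under the "each element placed independently and uniformly into one of $K$ buckets" interpretation of a uniformly random partition; the rest is a routine Hoeffding-plus-union-bound calculation, and I do not anticipate a genuine obstacle.
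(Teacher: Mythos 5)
Your proof is correct and follows essentially the same approach as the paper's: fix a witness for each $z\in\SSS(X,t)$, apply Hoeffding's inequality coordinate-wise to bound the probability that the contribution to any single bucket exceeds $(1+\tfrac1K)\tfrac{t}{K}$, then union-bound over buckets, coordinates, and subset sums. The Hoeffding variance bound $\sum_j x[j]^2 \le (t/K^5)\cdot t$ and the resulting $\exp(-2K)$ tail match the paper's calculation exactly.
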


For the proof of \cref{lem:small-items-high-dim} we rely on Hoeffding's inequality:

\begin{lemma}[Hoeffding's Inequality]
Let $Z_1, \dots, Z_n$ be independent random variables such that~$Z_i$ takes values in $[a_i, b_i]$. Then for $Z = Z_1 + \dots + Z_n$ it holds that
\begin{equation*}
    \Pr\parens*{Z - \Ex[Z] \geq \lambda} \leq \exp\parens*{-\frac{2\lambda^2}{\sum_{i=1}^n (b_i - a_i)^2}}.
\end{equation*}
\end{lemma}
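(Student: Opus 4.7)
The plan is to follow the standard Chernoff-style argument, which reduces the tail bound to a moment generating function (MGF) estimate and then bounds the MGF of each individual bounded random variable by the classical Hoeffding lemma. Throughout, let $W_i = Z_i - \mathbb{E}[Z_i]$, so that $W_i$ is a centered random variable supported in $[a_i - \mathbb{E}[Z_i],\, b_i - \mathbb{E}[Z_i]]$, an interval of length exactly $b_i - a_i$. Writing $W = \sum_i W_i = Z - \mathbb{E}[Z]$, we want to bound $\Pr(W \geq \lambda)$.

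First, for any parameter $s > 0$, apply Markov's inequality to the nonnegative random variable $e^{sW}$:
\begin{equation*}
    \Pr(W \geq \lambda) = \Pr(e^{sW} \geq e^{s\lambda}) \leq e^{-s\lambda} \cdot \mathbb{E}[e^{sW}].
\end{equation*}
Since the $Z_i$ (hence the $W_i$) are independent, the MGF factorizes: $\mathbb{E}[e^{sW}] = \prod_{i=1}^n \mathbb{E}[e^{sW_i}]$. This reduces the problem to bounding each factor individually.

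The key lemma to establish is Hoeffding's lemma: if $X$ is any random variable with $\mathbb{E}[X] = 0$ and $X \in [\alpha, \beta]$, then $\mathbb{E}[e^{sX}] \leq \exp(s^2(\beta-\alpha)^2/8)$. I would prove this by exploiting the convexity of $x \mapsto e^{sx}$: for any $x \in [\alpha, \beta]$,
\begin{equation*}
    e^{sx} \;\leq\; \frac{\beta - x}{\beta - \alpha}\, e^{s\alpha} + \frac{x - \alpha}{\beta - \alpha}\, e^{s\beta}.
\end{equation*}
Taking expectations and using $\mathbb{E}[X] = 0$ gives $\mathbb{E}[e^{sX}] \leq \frac{\beta}{\beta-\alpha} e^{s\alpha} - \frac{\alpha}{\beta-\alpha} e^{s\beta}$. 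Writing this as $\exp(\varphi(u))$ with $u = s(\beta - \alpha)$ and $p = -\alpha/(\beta - \alpha)$, one defines $\varphi(u) = -pu + \log(1 - p + p e^u)$, checks $\varphi(0) = \varphi'(0) = 0$, and bounds $\varphi''(u) \leq 1/4$ uniformly (this is the standard variance-of-a-Bernoulli calculation, and is the main technical obstacle: it requires verifying that $q(1-q) \leq 1/4$ for $q = pe^u/(1-p+pe^u)$). A Taylor expansion then yields $\varphi(u) \leq u^2/8$, i.e., the claimed bound.

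Combining everything, applying the lemma to each $W_i$ with $(\alpha_i, \beta_i) = (a_i - \mathbb{E}[Z_i],\, b_i - \mathbb{E}[Z_i])$ yields
\begin{equation*}
    \Pr(W \geq \lambda) \leq e^{-s\lambda} \cdot \prod_{i=1}^n \exp\!\Bigl(\tfrac{s^2 (b_i - a_i)^2}{8}\Bigr) = \exp\!\Bigl(-s\lambda + \tfrac{s^2}{8} \sum_{i=1}^n (b_i - a_i)^2\Bigr).
\end{equation*}
Optimizing over $s > 0$ by setting $s = 4\lambda / \sum_i (b_i - a_i)^2$ minimizes the exponent and produces exactly $-2\lambda^2 / \sum_i (b_i - a_i)^2$, completing the proof. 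The entire argument is a clean three-step recipe (Markov + MGF, independence, Hoeffding's lemma, optimize), and the only non-routine step is the bound $\varphi''(u) \leq 1/4$ inside the lemma.
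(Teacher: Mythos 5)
Your proof is correct: it is the standard Chernoff-method derivation (Markov's inequality applied to $e^{sW}$, factorization by independence, Hoeffding's lemma for each centered bounded summand via convexity and the bound $\varphi'' \leq 1/4$, then optimizing $s$), and the final optimization indeed yields the exponent $-2\lambda^2 / \sum_{i}(b_i-a_i)^2$. The paper states Hoeffding's inequality as a classical fact without proof, so there is no in-paper argument to compare against; your write-up is the canonical textbook proof and is complete modulo the routine calculus check of $\varphi''(u)\leq 1/4$ that you correctly flag.
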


\begin{proof}[Proof of \cref{lem:small-items-high-dim}]
Consider any subset sum $z \in \SSS(X, t)$ expressed as $z = x_1 + \dots + x_m$; we show that with good probability~\smash{$z \in \sum_{i=1}^K \SSS(X_i, (1 + \tfrac{1}{K}) \cdot \tfrac{t}{K})$}. Fix an arbitrary part $X_i$ and a coordinate~\makebox{$\ell \in [d]$}. Let~\makebox{$Z_1, \dots, Z_m$} be random variables such that $Z_j = x_j[\ell]$ if $x_j$ is placed in the part $X_i$, and~\makebox{$Z_j = 0$} otherwise. Writing $Z = Z_1 + \dots + Z_m$ we can bound the expectation of $Z$ by
\begin{equation*}
    \Ex[Z] = \Ex[Z_1] + \dots + \Ex[Z_m] \leq \frac{x_1[\ell]}{K} + \dots + \frac{x_m[\ell]}{K} = \frac{z[\ell]}{K} \leq \frac{t}{K}.
\end{equation*}
Using that the random variable $Z_j$ takes values in $[0\ldots x_j[\ell]]$ (where $x_j[\ell] \leq t / K^5$ and $\sum_j x_j[\ell] \leq t$), from Hoeffding's inequality it follows that
\begin{align*}
    &\Pr\parens*{Z > \parens*{1 + \frac{1}{K}} \cdot \frac{t}{K}} \\
    &\qquad\leq \Pr\parens*{Z - \Ex[Z] > \frac{t}{K^2}} \\
    &\qquad\leq \exp\parens*{-\frac{2(t / K^2)^2}{\sum_{j=1}^m x_j[\ell]^2}} \\
    &\qquad\leq \exp\parens*{-\frac{2(t / K^2)^2}{t^2 / K^5}} \\
    &\qquad= \vphantom{\bigg(}\exp\parens*{-2K}.
\end{align*}
That is, with probability at least $1 - \exp(-2K)$ the contribution of the subset $z$ to the $i$-th bucket in the $\ell$-th coordinate does not exceed the threshold $(1 + \frac{1}{K}) \cdot \frac{t}{K}$. Taking a union bound over the at most $|\SSS(X, t)|$ subset sums, the $K$ buckets and the $d$ dimensions, the total error probability indeed becomes $\exp(-2K) \cdot |\SSS(X, t)| \cdot K \cdot d \leq \exp(-K) \cdot |\SSS(X, t)| \cdot d$.
\end{proof}

\subsection{The Full Algorithm} \label{sec:high-dim:sec:full}
In this section we summarize the algorithm for high-dimensional Subset Sum; see \cref{alg:subset-sum-high-dim}. As before, the algorithm expects as a third input a number $s$ satisfying that $|\SSS(X, t)| \leq s \leq \poly(|\SSS(X, t)|)$. In the following \cref{sec:high-dim:sec:scaling} we finally remove this assumption.

\begin{algorithm}[t]
\caption{The fast algorithm for high-dimensional Subset Sum (see \cref{thm:subset-sum-high-dim})} \label{alg:subset-sum-high-dim}
\begin{algorithmic}[1]
    \Procedure{FastSubsetSum}{$X, t, s$}
        \If { $|X| \leq 1$ or $t \leq 1$} \label{alg:subset-sum-high-dim:line:trivial}
            \State Solve the instance trivially
        \EndIf
        \State Let $Z \gets \set{0}$
        \State Let $K \gets \ceil{100 \log s + \log d}$
        \State Let $X_S \gets X \cap [0 \ldots \frac{t}{K^5}]^d$ and $X_L \gets X \setminus X_S$

        \medskip
        \State\emph{(Small items)}
        \State Partition $X_S$ randomly into multisets $X_S = X_1 \sqcup \dots \sqcup X_K$
        \For {$k \in [K]$}
            \State Recursively compute $Z_k \gets \textsc{FastSubsetSum}(X_k, (1+\frac{1}{K}) \cdot \frac{t}{K}, s)$
            \State Compute $Z \gets Z + Z_k$ by \cref{lem:sparse-conv} \label{alg:subset-sum-high-dim:line:sumset}
        \EndFor

        \medskip
        \State\emph{(Large items)}
        \State\Return $(Z + \mathcal{S}(X_L, t)) \cap [0 \ldots t]^d$ computed by \cref{lem:large-items-high-dim} (with~\smash{$\gamma = \frac{1}{K^5}$}) \label{alg:subset-sum-high-dim:line:large}
    \EndProcedure
\end{algorithmic}
\end{algorithm}

For technical reasons, we start with the following lemma that establishes a crude bound on the total number of recursive calls. We say that a call to $\textsc{FastSubsetSum}(X, t, s)$ is \emph{trivial} if $X$ is empty.

\begin{lemma}[Number of Recursive Calls] \label{lem:subset-sum-rec-calls}
With probability at least $1 - 1/|X|^{40}$, the recursion depth of \cref{alg:subset-sum-high-dim} is at most $400 \log |X|$, and in that case the number of recursive calls is at most $400 |X| \log |X|$.
\end{lemma}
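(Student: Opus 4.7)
The plan is to bound the recursion depth via an ``item-separation'' argument on the random partitions, and then bound the number of recursive calls level-by-level using the fact that each item of $X$ lies in only one active call per level.

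For the depth bound, I would track an arbitrary pair of distinct items $x, y \in X$ and let $E_h$ be the event that $x$ and $y$ lie in a common recursive call at depth $h$. Given $E_h$, for $E_{h+1}$ to hold both items must first be classified as small (i.e., lie in $X_S$ at depth $h$) and then be assigned to the same one of the $K$ sub-buckets by the uniform random partition on line~9; this last assignment has probability exactly $1/K$ and is independent of everything that happened at previous levels. Hence $\Pr[E_{h+1} \mid E_h] \le 1/K$ (the inequality allowing for the possibility that one of the items exits the recursion as a large item), so $\Pr[E_h] \le K^{-h}$ by induction. Since $K = \lceil 100 \log s + \log d \rceil \ge 100 \log |X|$ using $s \ge |\SSS(X,t)| \ge |X|$, we have $K \ge 2$ whenever $|X| \ge 2$ (the cases $|X| \le 1$ trigger the base case on line~\ref{alg:subset-sum-high-dim:line:trivial} immediately). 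A union bound over the $\binom{|X|}{2}$ pairs of items then yields
\begin{equation*}
    \Pr\bigl[\text{some pair still together at depth } 400 \log |X|\bigr] \le \binom{|X|}{2} \cdot K^{-400 \log |X|} \le |X|^{-40}.
\end{equation*}
On the complementary event, every recursive call at depth $400 \log |X|$ contains at most one item and immediately terminates, proving the depth bound.

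For the number of recursive calls, I would condition on the depth being at most $H = 400 \log |X|$ and count level-by-level. The crucial observation is that at every level of the recursion, the total number of items summed across all active calls is at most $|X|$: items in $X_S$ are placed in exactly one bucket (and hence belong to the input of exactly one child call), while items in $X_L$ are removed from the recursion entirely and handled via the large-items routine on line~\ref{alg:subset-sum-high-dim:line:large}. Since every non-empty recursive call contributes at least one item, the number of such calls per level is at most $|X|$, so over $H$ levels their total number is at most $H \cdot |X| = 400 |X| \log |X|$ (any additional empty-bucket calls cost only constant time and are absorbed into the running-time analysis).

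The main delicacy is in the depth argument: the bound $\Pr[E_{h+1} \mid E_h] \le 1/K$ must survive the possibility that items drop out of the recursion mid-way (when classified as large). This is handled by observing that $E_{h+1}$ is contained in the event ``both items were assigned the same sub-bucket by the level-$h$ random partition'', which by independence from the prior levels and from the (deterministic) small/large classification occurs with conditional probability exactly $1/K$ given $E_h$.
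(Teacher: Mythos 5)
Your proof is correct but takes a genuinely different route from the paper for the depth bound. The paper defines a partition to be ``good'' if every one of its $K$ parts has size at most $|X_S|/2$, shows each call is good with probability at least $1/2$ by bounding the number of collisions via Markov's inequality, and then applies a Chernoff bound to argue that a path of length $400\log|X|$ contains at least $\log|X|$ good partitions with high probability (hence reaches a leaf), finishing with a union bound over the at most $|X|$ leaves. You instead track a fixed pair of distinct items and show that the pair separates at each level with probability at least $1 - 1/K$; after $400\log|X|$ levels, a union bound over all $\binom{|X|}{2}$ pairs gives that every call at that depth contains at most one item, hence terminates. Your pair-separation argument is arguably cleaner---it avoids both the collision-counting Markov step and the Chernoff bound---and in fact gives a quantitatively stronger statement, since separating with probability $1-1/K$ per level really yields depth $O(\log|X|/\log K)$; the paper only needs the weaker $O(\log|X|)$ bound so does not push for this. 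Both proofs bound the number of calls per level by $|X|$ via the same observation that every item lies in at most one active call. One small point of care: the lemma implicitly counts only non-trivial calls (the paper introduces the notion of ``trivial'' call immediately before the lemma precisely to set this up), since each non-trivial call may spawn up to $K$ trivial children; your parenthetical about empty-bucket calls handles this, though it is worth stating explicitly that the $400|X|\log|X|$ bound is for non-trivial calls.
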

\begin{proof}
We say that a call to the algorithm is \emph{good} if it selects a partition $X_S = X_1 \sqcup \dots \sqcup X_K$ in which all parts have size at most $|X_S| / 2$. We claim that each call is good with probability at least $\frac{1}{2}$. Indeed, a call is bad only if there are at least~\smash{$\binom{|X_S| / 2}{2}$} \emph{collisions}, i.e., pairs~\makebox{$x, x' \in X_S$} of distinct elements that have fallen into the same bucket $X_i$. But the expected number of collisions is~\smash{$\binom{|X_S|}{2} / K \leq \binom{|X_S|}{2} / 100$} and therefore, by applying Markov's inequality, each call is bad with probability at most $(\binom{|X_S|}{2} / 100) / \binom{|X_S| / 2}{2} \leq \frac{1}{2}$. 
By Chernoff's bound, for any length-$\ell$ path in the recursion tree starting at the root, at least $\ell/4$ of the recursive calls choose a good partition with probability at least $1 - \exp(-\ell/8)$. Therefore, with probability at least $1 - 1/|X|^{50}$ any particular path of length $\ell := 400 \log |X|$ contains at least $100 \log |X| \ge \log |X|$ good partitions and thus reaches a leaf before reaching the end of the path. In particular, by a union bound over all $\le |X|$ leaves, the recursion depth is bounded by $400 \log |X|$ with probability at least $1 - 1/|X|^{40}$. The claim follows. 
\end{proof}

\begin{lemma}[Correctness of \cref{alg:subset-sum-high-dim}] \label{lem:subset-sum-correctness}
Let $X \subseteq [0 \ldots t]^d$ be a multiset, and assume $s \geq |\SSS(X, t)|$. Then \cref{alg:subset-sum-high-dim} correctly computes $\SSS(X, t)$ with high probability.
\end{lemma}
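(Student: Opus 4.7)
The plan is to prove correctness by induction on the recursion tree of \cref{alg:subset-sum-high-dim}, conditioned on the high-probability event from \cref{lem:subset-sum-rec-calls} that the tree has depth at most $O(\log |X|)$ and therefore at most $O(|X| \log |X|)$ nodes. A key invariant to observe at the outset is that every recursive call $\textsc{FastSubsetSum}(X', t', s)$ satisfies $X' \subseteq X$ and $t' \leq t$, hence $|\SSS(X', t')| \leq |\SSS(X, t)| \leq s$; the parameter $s$ remains a valid upper bound throughout the recursion. This uniformity is crucial for the per-node failure-probability estimate.

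The base case, where $|X'| \leq 1$ or $t' \leq 1$ (line~\ref{alg:subset-sum-high-dim:line:trivial}), is trivially correct. For the inductive step, assume that every recursive call returns $Z_k = \SSS(X_k, (1 + \tfrac{1}{K}) \cdot t'/K)$ correctly with high probability, and condition on this. By \cref{lem:sparse-conv} the variable $Z$ after the loop equals the untruncated $K$-fold sumset $\sum_{k=1}^K \SSS(X_k, (1 + \tfrac{1}{K}) \cdot t'/K)$. Applying \cref{lem:small-items-high-dim} with $K = \lceil 100 \log s + \log d \rceil$, which is valid since $X_S \subseteq [0 \ldots t'/K^5]^d$ by construction, yields with probability at least $1 - \exp(-K) \cdot |\SSS(X_S, t')| \cdot d$ the identity $\SSS(X_S, t') = Z \cap [0 \ldots t']^d$. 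Finally, line~\ref{alg:subset-sum-high-dim:line:large} invokes the Las Vegas algorithm of \cref{lem:large-items-high-dim}, which is always correct and returns $(Z + \SSS(X_L, t')) \cap [0 \ldots t']^d$. Since every vector in $\SSS(X_L, t')$ is coordinate-wise nonnegative, any $z + y \in [0 \ldots t']^d$ with $z \in Z$ and $y \in \SSS(X_L, t')$ forces $z \in [0 \ldots t']^d$, so the outer intersection implicitly truncates $Z$ to $\SSS(X_S, t')$. Thus the output is $(\SSS(X_S, t') + \SSS(X_L, t')) \cap [0 \ldots t']^d = \SSS(X', t')$, completing the induction.

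For the final union bound, the only randomized failure at each node is the application of \cref{lem:small-items-high-dim}, whose failure probability is at most
\[
\exp(-K) \cdot |\SSS(X', t')| \cdot d \;\leq\; \exp(-100 \log s - \log d) \cdot s \cdot d \;=\; s^{-99}.
\]
Summing over the at most $O(|X| \log |X|)$ nodes, adding the $|X|^{-40}$ failure probability from \cref{lem:subset-sum-rec-calls}, and using the standing assumption $s \geq |\SSS(X, t)| \geq |X|$, the total failure probability is polynomially small in $|X|$. Replacing the constants $100$ and $400$ by larger values boosts the success probability to $1 - |X|^{-c}$ for any prespecified constant $c$.

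The step that requires the most care is the implicit truncation argument in the large-items combination: the algorithm never truncates $Z$ inside the small-items loop, so correctness hinges on the observation that the outer intersection in line~\ref{alg:subset-sum-high-dim:line:large} automatically restricts $Z$ to its intersection with $[0 \ldots t']^d$. Verifying that the $s$-invariant is preserved uniformly across all nodes of the recursion is equally essential, since otherwise the per-node failure bound $s^{-99}$ would not hold uniformly and the union bound would degrade.
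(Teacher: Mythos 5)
Your proof is correct and follows essentially the same strategy as the paper's: condition on \cref{lem:subset-sum-rec-calls} to bound the number of recursive calls, apply \cref{lem:small-items-high-dim} to bound the per-node failure probability, and take a union bound (noting that \cref{lem:large-items-high-dim} is Las Vegas and contributes no error). Your explicit handling of the truncation step (that the outer intersection in line~\ref{alg:subset-sum-high-dim:line:large} automatically restricts $Z$ to $[0\ldots t']^d$, since both $Z$ and $\SSS(X_L,t')$ consist of coordinate-wise nonnegative vectors) and of the uniformity of the $s$-bound across recursive calls are useful elaborations of details the paper leaves implicit.
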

\begin{proof}
It is easy to verify that the algorithm never reports wrong subset sums, but it may omit subset sums in its output. To bound this error event, focus on any recursive call and consider any subset sum $z \in \SSS(X, t)$. Write $z = z_S + z_L$ for the contributions from $X_S$ and $X_L$, respectively. The computation of the large items in \cref{alg:subset-sum:line:large} is Las Vegas, thus the error event only involves the small items. Specifically, the algorithm errs if $z_S \not\in \sum_{k=1}^K \SSS(X_k, (1 + \frac{1}{K}) \cdot \frac{t}{K})$, which happens with probability at most $\exp(-K) \cdot |\SSS(X, t)| \cdot d$ by \cref{lem:small-items}. Taking a union bound over the at most~$400 |X| \log |X|$ recursive calls (by the previous \cref{lem:subset-sum-rec-calls}), the error probability is at most
\begin{equation*}
    \exp(-K) \cdot d \cdot |\SSS(X, t)|^2 \cdot 400|X| \log |X| \leq s^{-100} \cdot d^{-1} \cdot d \cdot |\SSS(X, t)|^2 \cdot 400|X| \log |X| \leq O(|\SSS(X, t)|^{-90}),
\end{equation*}
using the assumptions $|X| \leq |\SSS(X, t)| \leq s$. (Of course, the constant $90$ can be made larger by adjusting the constant in the definition of $K$.)
\end{proof}

\begin{lemma}[Running Time of \cref{alg:subset-sum-high-dim}] \label{lem:subset-sum-time}
Let $X \subseteq [0 \ldots t]^d$ be a multiset, and let $s \geq |\SSS(X, t)|$. Then \cref{alg:subset-sum-high-dim} runs in time~\smash{$|\SSS(X, t)| \cdot |X|^{1-\frac{1}{d+1}} \cdot (\log s)^{\Order(d)}$} with high probability.
\end{lemma}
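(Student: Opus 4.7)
The plan is to express the total running time as a sum of local work across all non-trivial recursive calls, then bound this sum using \cref{thm:submultiplicativity}, \cref{lem:subset-sum-rec-calls}, and Hölder's inequality.

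For the first step, I analyze the local work at a single call on $(X', t', s)$ with $n' := |X'|$. The small-items loop performs $K$ sparse convolutions; applying \cref{thm:submultiplicativity} to $A_k := \SSS(X_k, (1+\tfrac{1}{K})\tfrac{t'}{K})$ — and observing that $B_i = \sum_{j \ne i} A_j \subseteq \SSS(X', (K-1)(1+\tfrac{1}{K})\tfrac{t'}{K}) \subseteq \SSS(X', t')$ since $(K-1)(1+\tfrac{1}{K})/K \le 1$ — every intermediate sumset has size at most $|\SSS(X', t')|^{K/(K-1)} \le 2 |\SSS(X', t')|$, using $K = \Theta(\log s)$ and $|\SSS(X', t')| \le s$. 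Hence the $K$ sparse convolutions cost $\widetilde{O}(K \cdot |\SSS(X', t')|)$ in total via \cref{lem:sparse-conv}. The large-items step costs $\widetilde{O}(|\SSS(X', t')| \cdot |X_L|^{1-1/(d+1)} \cdot \poly(K))$ by \cref{lem:large-items-high-dim} (with $\gamma^{-1} = K^5$). Combining the two, the local work per non-trivial call is $\widetilde{O}(|\SSS(X', t')| \cdot (n')^{1-1/(d+1)} \cdot (\log s)^{O(d)})$; the $K |\SSS|$ term is dominated by the large-items cost once $n' \ge K^{(d+1)/d}$ and is itself polylogarithmic otherwise.

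For the second step, I aggregate over the recursion tree. By \cref{lem:subset-sum-rec-calls}, with high probability the recursion has depth $D = O(\log n)$ and a total of $O(n \log n)$ calls. A key structural observation is that the large-item sets $X_L$ taken across \emph{all} recursive calls are pairwise disjoint subsets of the root input $X$, since each call removes $X_L$ before recursing on $X_S$ only; therefore $\sum_{\text{calls}} |X_L| \le n$. I then apply Hölder's inequality with conjugate exponents $d+1$ and $(d+1)/d$:
\begin{equation*}
    \sum_{\text{calls}} |\SSS_{\text{local}}| \cdot |X_L|^{1-\frac{1}{d+1}} \le \biggl(\sum_{\text{calls}} |\SSS_{\text{local}}|^{d+1}\biggr)^{\!\frac{1}{d+1}} \cdot \biggl(\sum_{\text{calls}} |X_L|\biggr)^{\!\frac{d}{d+1}}.
\end{equation*}
The second factor is at most $n^{d/(d+1)} = n^{1-1/(d+1)}$. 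For the first factor, I iterate the per-level recurrence
\begin{equation*}
    \sum_{\text{children of }p} |\SSS_c|^{d+1} \le |\SSS_p|^d \cdot \sum_c |\SSS_c| \le |\SSS_p|^{d+1} \cdot |\SSS_p|^{1/(K-1)} + K |\SSS_p|^d,
\end{equation*}
which follows from $\max_c |\SSS_c| \le |\SSS_p|$ together with the bound $\sum_c |\SSS_c| \le |\SSS_p|^{K/(K-1)} + K$ obtained from \cref{thm:submultiplicativity} and \cref{lem:sumset-lower-bound}. The sparse-convolution contribution $\widetilde{O}(K \sum_{\text{calls}} |\SSS_{\text{local}}|)$ is controlled analogously via the corresponding recurrence for $\sum |\SSS_{\text{local}}|$ itself.

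The main obstacle will be controlling the compounded multiplicative factor $\prod_p |\SSS_p|^{1/(K-1)}$ along a root-to-leaf path of length $D = O(\log n)$: with $K = \Theta(\log s)$ each factor is a constant slightly greater than $1$, so naive compounding would yield a polynomial overhead in $n$. The resolution uses the standing hypotheses $|X| \le |\SSS(X,t)| \le s \le \poly(|\SSS(X,t)|)$, which imply $\log s = \Theta(\log |\SSS(X,t)|) = \Omega(\log n)$; choosing the hidden constant in $K = \Theta(\log s)$ sufficiently large relative to $d$ ensures the compounded factor, together with the additive $K |\SSS_p|^d$ contributions summed over all levels (which themselves are dominated using $|\SSS(X,t)| \ge |X|$), fits into the allowed $(\log s)^{O(d)}$ overhead. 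Combining Hölder's inequality with this bound and the $n^{1-1/(d+1)}$ bound on the second factor yields the claimed running time $|\SSS(X,t)| \cdot |X|^{1-1/(d+1)} \cdot (\log s)^{O(d)}$.
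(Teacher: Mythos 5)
Your per-call analysis and the observation that the large-item sets $X_L$ are pairwise disjoint across all recursive calls (hence $\sum_{\text{calls}}|X_L| \le n$) are both fine; the paper simply uses the cruder bound $|X_L|\le|X|$ at every node, so this part is a harmless variation. The gap is in how you aggregate the $|\SSS_{\text{local}}|^{d+1}$ terms. You propagate the per-parent recurrence $\sum_c |\SSS_c|^{d+1} \le |\SSS_p|^{d+1}\cdot|\SSS_p|^{1/(K-1)} + K|\SSS_p|^d$ down the recursion tree, which forces you to control the compounded factor $\prod_p |\SSS_p|^{1/(K-1)}$ along a root-to-leaf path. Each individual factor is $\le |\SSS(X,t)|^{1/(K-1)} \le s^{1/(K-1)} \approx 2^{1/100}$, a constant strictly greater than $1$, and the recursion depth $D$ is $\Theta(\log|X|)$ (Lemma~5 gives only an upper bound $400\log|X|$, which can be tight). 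The compounded factor is therefore $|\SSS(X,t)|^{D/(K-1)}$ with exponent $\Theta(\log|X|/\log s)$. Since $\log s \ge \log|\SSS(X,t)| \ge \log|X|$, this exponent is at most a constant, but in the natural regime where $s = \mathrm{poly}(|\SSS(X,t)|)$ and $|X|$ is comparable to $|\SSS(X,t)|$, it is also $\Omega(1)$, so the compounded factor is a genuine polynomial $|\SSS(X,t)|^{\Theta(1)}$, not $(\log s)^{\Order(d)}$. Your proposed fix of enlarging the hidden constant in $K = \Theta(\log s)$ only scales the constant in the exponent and can never drive it to $o(1)$, so this route cannot yield the claimed overhead.

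The paper avoids iterating the recurrence altogether. For each fixed level $\ell$ it bounds $\sum_i |\SSS(X_i,(1+\tfrac{1}{K})^\ell\cdot\tfrac{t}{K^\ell})|$ \emph{directly} against the root's $|\SSS(X,t)|$: there are $m\le K^\ell$ parts on level $\ell$ and $(1+\tfrac{1}{K})^\ell = \Order(1)$ (since $\ell = \Order(\log|X|)$ and $K=\Omega(\log|X|)$), so the sumset $\sum_i C_i$ of the level-$\ell$ sets has all coordinates at most $ct$ for a small constant integer $c$; chunking the index set into $c$ groups and invoking Lemma~2 (the sumset lower bound) gives $\sum_i |C_i| = \Order(|\SSS(X,t)|)$. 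The factor $|\SSS(X,t)|^{1/(K-1)} = \Order(1)$ from sub-multiplicativity is then paid \emph{once per level}, not compounded, and summing over $\Order(\log|X|)$ levels only costs a logarithmic factor. To close your proof you need a per-level bound of this kind — charging each level directly against $|\SSS(X,t)|$ via Lemma~2 and the geometric decay of the targets — rather than a level-to-level recurrence whose multiplicative factor exceeds $1$.
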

\begin{proof}
Consider first the top-level execution of \cref{alg:subset-sum} and ignore the recursive calls. The construction of~$X_S$ and~$X_L$, and the partition of $X_S$ into smaller subsets runs in linear time~$\Order(|X|)$. There are~$K$ calls of \cref{alg:subset-sum:line:sumset} each of which runs in time
\begin{equation*}
    \widetilde\Order(|Z_1 + \dots + Z_k|) = \widetilde\Order\parens*{\abs*{\sum_{k \in [K]} \SSS(X_k, (1 + \tfrac{1}{K}) \cdot \tfrac{t}{K})}}
\end{equation*}
Pick $A_k = \SSS(X_k, (1 + \frac{1}{K}) \cdot \frac{t}{K})$ and observe that $B_k = \sum_{i \neq k} A_i \subseteq \SSS(X, t)$ (as $(K - 1) \cdot (1 + \frac{1}{K}) \cdot \frac{t}{K} \leq t$). Then from \cref{thm:submultiplicativity} it follows that
\begin{equation*}
    |A_1 + \dots + A_K| \leq \parens*{\prod_{k=1}^K |B_k|}^{\frac{1}{K-1}} \leq |\SSS(X, t)|^{\frac{K}{K-1}} \leq |\SSS(X, t)|^{1 + \frac{1}{K-1}}.
\end{equation*}
As we choose $K \geq \log s \geq \log |\SSS(X, t)|$, the $K$ repetitions of \cref{alg:subset-sum-high-dim:line:sumset} amount to~\smash{$|\SSS(X, t)| \cdot \polylog(s)$} time. Calling \cref{lem:large-items-high-dim} in \cref{alg:subset-sum-high-dim:line:large} takes time~\smash{$|\SSS(X, t)| \cdot |X|^{1-\frac{1}{d+1}} \cdot (\log s)^{\Order(d)}$}.

Finally, let us take the cost of the recursive calls into account. Throughout we condition on the event that the recursion depth reaches at most $\Order(\log |X|)$ (which happens with high probability by \cref{lem:subset-sum-rec-calls}). Then the computation can be modeled by a recursion tree with fan-out at most~$K$, depth at most $\Order(\log |X|)$, and at most $\Order(|X| \log |X|)$ nodes in total (by \cref{lem:subset-sum-rec-calls}). Focus on any (non-root) level~\makebox{$\ell \geq 1$} in this tree with $m \leq \min\set{|X|, K^\ell}$ nodes. This level induces a partition of~$X$ into parts~\makebox{$X_1 \sqcup \dots \sqcup X_m$}. By the previous paragraph, the total running time spend on the current level~$\ell$ is 
\begin{align*}
    &\sum_{i \in [m]} \abs*{\SSS(X_i, (1 + \tfrac{1}{K})^\ell \cdot \tfrac{t}{K^\ell})}^{1+\frac{1}{K-1}} \cdot |X|^{1-\frac{1}{d+1}} \cdot (\log s)^{\Order(d)} \\
    &\qquad\leq \sum_{i \in [m]} \abs*{\SSS(X_i, (1 + \tfrac{1}{K})^\ell \cdot \tfrac{t}{K^\ell})} \cdot |\SSS(X, t)|^{\frac{1}{K-1}} \cdot |X|^{1-\frac{1}{d+1}} \cdot (\log s)^{\Order(d)} \\
    &\qquad\leq \sum_{i \in [m]} \abs*{\SSS(X_i, (1 + \tfrac{1}{K})^\ell \cdot \tfrac{t}{K^\ell})} \cdot |X|^{1-\frac{1}{d+1}} \cdot (\log s)^{\Order(d)}.
\end{align*}
Let~\smash{$C_i = \SSS(X_i, (1 + \tfrac{1}{K})^\ell \cdot \tfrac{t}{K^\ell})$}. To bound the remaining sum~\smash{$\sum_{i \in [m]} |C_i|$}, first note that the maximum integer in $\sum_{i \in [m]} C_i$ has size
\begin{equation*}
    m \cdot \parens*{1 + \frac{1}{k}}^\ell \cdot \frac{t}{K^\ell} \leq t \cdot \parens*{1 + \frac{1}{100 \log |X|}}^{\Order(\log |X|)} \cdot t \leq c t,
\end{equation*}
for some (integer) constant $c \geq 1$. Therefore, by partitioning $[m]$ into $c$ equal-sized chunks $I_1, \dots, I_c$ we can guarantee that $\sum_{i \in I_j} C_i \subseteq \SSS(X, t)$. Hence, by applying \cref{lem:sumset-lower-bound} we have that
\begin{align*}
    &\sum_{i \in [m]} |C_i| = \sum_{j=1}^c \sum_{i \in I_j} |C_i| \leq \sum_{j=1}^c \parens*{|I_j| + \abs*{\sum_{i \in I_j} C_i}} = \Order(|\SSS(X, t)|).
\end{align*}
In summary, each level of recursion runs in total time~\smash{$|\SSS(X, t)| \cdot |X|^{1-\frac{1}{d+1}} \cdot \polylog(s)$}, and summing over the $\log |X| \leq \log s$ levels worsens the running time only by a logarithmic factor. 
\end{proof}

\subsection{Approximating the Number of Subset Sums} \label{sec:high-dim:sec:scaling}
Finally, we remove the assumption that the algorithm requires an estimate $s$ of $|\SSS(X, t)|$ as an input, and thereby complete the proof of \cref{thm:subset-sum-high-dim}. Recall that we could easily set $s = t$ (which is a legal choice) to obtain an algorithm with an overhead of $\polylog(t)$ rather than $\polylog |\SSS(X, t)|$.

\begin{proof}[Proof of \cref{thm:subset-sum-high-dim}]
Provided that $s \geq |\SSS(X, t)|$ we have already established in \cref{lem:subset-sum-correctness,lem:subset-sum-time} that \cref{alg:subset-sum} is correct and runs in time~\smash{$|\SSS(X, t)| \cdot |X|^{1-\frac{1}{d+1}} \cdot (\log s)^{\Order(d)}$}. We now design a recursive algorithm that computes $\SSS(X, t)$ with high probability, without expecting an approximation~$s$. For constant-sized sets $X$ we can compute the answer trivially, say by Bellman's algorithm. Otherwise, arbitrarily partition $X = X_1 \sqcup X_2$ into sets of size at most $\ceil{|X| / 2}$ and compute the subset sums $\SSS(X_1, \frac{t}{2})$ and $\SSS(X_2, \frac{t}{2})$ recursively. Write $s_1 = |\SSS(X_1, \frac{t}{2})|$ and~\makebox{$s_2 = |\SSS(X_2, \frac{t}{2})|$}, and choose $s = s_1^{4d} \cdot s_2^{4d} \cdot |X|^{4d}$. We then call \cref{alg:subset-sum} on $(X, t, s)$ and report the resulting answer.

For the correctness, we argue that $s \geq |\SSS(X, t)|$. To this end we show the existence of an injection $f : \SSS(X, t) \to X^{4d} \times \SSS(X_1, \frac{t}{2})^{4d} \times \SSS(X_2, \frac{t}{2})^{4d}$. Fix any subset sum~\makebox{$z \in \SSS(X, t)$} expressed as~\makebox{$z = x_1 + \dots + x_m$}; we describe how to choose $f(z)$. First, take aside all vectors $x_i$ with an entry of size larger than $\frac{t}{4}$; note that there can be at most $4d$ such terms. Afterwards, we split the remaining subset sum into~\makebox{$x_{1, 1} + \dots + x_{1, m_1} + x_{2, 1} + \dots + x_{2, m_2}$} where~\makebox{$x_{1, 1}, \dots, x_{1, m_1} \in X_1 \cap [0 \ldots \frac{t}{4}]^d$} and~\makebox{$x_{2, 1}, \dots, x_{2, m_1} \in X_2 \cap [0 \ldots \frac{t}{4}]^d$}. Next, focus on $x_{1, 1} + \dots + x_{1, m_1}$. We can greedily take the maximal prefix sum with value in $[0 \ldots \frac{t}{2}]^d$; since each vector is $\frac{t}{4}$-bounded this prefix sum is at least~$\frac{t}{4}$ in some coordinate. Thus, by repeatedly splitting off these prefix sums we eventually end up splitting the initial subset sum $x_{1, 1} + \dots + x_{1, m_1} \in \SSS(X_1, t)$ into at most $4d$ subset sums contained in~$\SSS(X_1, \frac{t}{2})$. We similarly split $x_{2, 1} + \dots + x_{2, m_2}$. Finally, we choose $f(z)$ to be the $(4d + 4d + 4d)$-tuple obtained from the $4d$ heavy vectors taken aside in the beginning, plus the~\makebox{$4d + 4d$} subdivisions of the remaining subset sums. This mapping $f$ is clearly injective as we can recover $z$ by summing all tuple entries in $f(z)$, and thus it follows that indeed $s = s_1^{4d} \cdot s_2^{4d} \cdot |X|^{4d} \geq |\SSS(X, t)|$.

For the running time observe that we always pick $s = s_1^{4d} \cdot s_2^{4d} \cdot |X|^{4d} \leq |\SSS(X, t)|^{12d}$, and therefore $\log(s) = \Order(\log |\SSS(X, t)|)$. It remains to bound the overhead due to the additional layer of recursion. Again, view the computation as a binary tree with depth $\log |X|$. Each layer $\ell$ of the tree induces a partition of $X$ into $2^\ell \leq |X|$ parts $X = X_1 \sqcup \dots \sqcup X_{2^\ell}$. Each node is associated to an execution of the algorithm that takes time~\smash{$\widetilde\Order(|\SSS(X_i, \frac{t}{2^\ell})| \cdot |X|^{1-\frac{1}{d+1}})$}, and therefore the total running time is bounded by
\begin{gather*}
    \sum_{i=1}^{2^\ell} \widetilde\Order\parens*{|\SSS(X_i, \tfrac{t}{2^\ell})| \cdot |X|^{1-\frac{1}{d+1}}} \\
    \qquad\leq \widetilde\Order\parens*{\parens*{2^\ell + \abs*{\sum_{i=1}^{2^\ell} \SSS(X_i, \tfrac{t}{2^\ell})}} \cdot |X|^{1-\frac{1}{d+1}}} \\
    \qquad= \widetilde\Order\parens*{|\SSS(X, t)| \cdot |X|^{1-\frac{1}{d+1}}},
\end{gather*}
using \cref{lem:sumset-lower-bound} for the first inequality and $2^\ell \le |X| \le |\SSS(X, t)|$ as well as $\sum_{i=1}^{2^\ell} \SSS(X_i, \tfrac{t}{2^\ell}) \subseteq \SSS(X, t)$ for the second.
\end{proof}
\section{Unbounded Subset Sum} \label{sec:unbounded}
Our next goal is to show that the techniques developed in the previous sections similarly apply to the \emph{Unbounded} Subset Sum problem (where we can select each item an unbounded number of times), even by means of deterministic algorithms. Our goal is to prove \cref{thm:unbounded}. In fact, we prove the following generalization to higher dimensions:

\begin{algorithm}[t]
\caption{The fast algorithm for Unbounded Subset Sum (see \cref{thm:unbounded-high-dim})} \label{alg:unbounded}
\begin{algorithmic}[1]
    \Procedure{FastUnboundedSubsetSum}{$X, t$}
        \If{$|X| \leq 1$ or $t \leq \Order(1)$}
            \State Solve the instance naively
        \EndIf
        \State Let $Z \gets \set{0}$
        \State Let $K \gets \ceil{100 \log t + \log d}$
        \State Let $X_S \gets X \cap [0\ldots \frac{t}{K^5}]^d$ and $X_L \gets X \setminus X_S$

        \medskip
        \State\emph{(Small items)}
        \State Recursively compute $Z_0 \gets \textsc{FastUnboundedSubsetSum}(X_S, (1 + \frac{1}{K}) \cdot \frac{t}{K})$
        \RepeatTimes{$K$}
            \State Compute $Z \gets Z + Z_0$ by \cref{lem:sparse-conv} \label{alg:unbounded:line:sumset}
        \EndRepeatTimes

        \medskip
        \State\emph{(Large items)}
        \RepeatTimes{$K^5 \cdot d$}
            \State Compute $Z \gets (Z + X_L) \cap [0\ldots t]^d$ by \cref{thm:prefix-sumset-high-dim}
        \EndRepeatTimes
        \State\Return $Z$
    \EndProcedure
\end{algorithmic}
\end{algorithm}

\begin{theorem}[High-Dimensional Unbounded Subset Sum] \label{thm:unbounded-high-dim}
Let $d \geq 1$ be constant. Given a multiset $X \subseteq [t]^d$ we can compute $\SSS^*(X, t)$ in time~\smash{$\widetilde\Order(|\SSS^*(X, t)| \cdot |X|^{1-\frac{1}{d+1}})$} (by a deterministic algorithm).
\end{theorem}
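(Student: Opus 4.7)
The plan is to follow the template of \cref{thm:subset-sum-high-dim}, with the crucial simplification that for unbounded subset sums the randomized color-coding of small items can be replaced by a deterministic greedy decomposition, thereby yielding an overall deterministic algorithm. Following \cref{alg:unbounded}, the algorithm splits $X$ into small items $X_S \subseteq [0\ldots t/K^5]^d$ and large items $X_L = X \setminus X_S$, recursively computes $Z_0 = \SSS^*(X_S, (1+\tfrac{1}{K}) \cdot \tfrac{t}{K})$, builds the $K$-fold sumset $Z = K \cdot Z_0$ by $K$ invocations of the deterministic sparse convolution (\cref{lem:sparse-conv}), and finally absorbs large items by $K^5 \cdot d$ invocations of the deterministic prefix-restricted sumset algorithm (\cref{thm:prefix-sumset-high-dim}).

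For correctness I would establish two structural facts. First, the \emph{large items bound}: any $s \in \SSS^*(X, t)$ uses at most $K^5 \cdot d$ items from $X_L$ (with multiplicity). Indeed, each $x \in X_L$ has some coordinate of size $\geq t/K^5$, and summing the coordinate totals over all $d$ coordinates gives at most $dt$; hence at most $dt / (t/K^5) = d K^5$ large items are used. This justifies the outer loop count. Second, the deterministic analog of \cref{lem:small-items-high-dim}: for $X_S \subseteq [0\ldots t/K^5]^d$,
\begin{equation*}
    \SSS^*(X_S, t) \,\subseteq\, K \cdot \SSS^*(X_S, (1+\tfrac{1}{K}) \cdot \tfrac{t}{K}).
\end{equation*}
Given $z = \sum_i k_i x_i \leq t$ coordinate-wise, I would view each copy of each $x_i$ as an individual vector ``job'' of $\ell_\infty$-norm at most $t/K^5$ and distribute them greedily across $K$ buckets, achieving per-bucket load at most $(1+\tfrac{1}{K}) \cdot \tfrac{t}{K}$ per coordinate; any residual dependence on $d$ in the scheduling guarantee is absorbed into the definition of $K$ since $d$ is constant.

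For the running time, the central ingredient is to bound $|K \cdot Z_0|$ using \cref{thm:submultiplicativity}. Setting $A_1 = \dots = A_K = Z_0$ and $B_i = (K-1) \cdot Z_0$, one verifies $B_i \subseteq \SSS^*(X, (K-1)(1+\tfrac{1}{K}) \cdot \tfrac{t}{K}) \subseteq \SSS^*(X, t)$ because $(K-1)(K+1)/K^2 \leq 1$. Thus $|K \cdot Z_0| \leq |\SSS^*(X, t)|^{K/(K-1)} = \Order(|\SSS^*(X, t)|)$ for $K \geq \log t$. Combining this with the deterministic costs of \cref{lem:sparse-conv} and \cref{thm:prefix-sumset-high-dim}, and summing over the $\Order(\log t)$ levels of recursion (where \cref{lem:sumset-lower-bound} controls the total per-level sumset size as in the proof of \cref{lem:subset-sum-time}), yields the claimed time $\widetilde\Order(|\SSS^*(X, t)| \cdot |X|^{1-1/(d+1)})$, with the $\polylog(t)$ overhead absorbed into the $\widetilde\Order$ notation.

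The main obstacle I expect is the deterministic small-items decomposition. In one dimension, the natural greedy ``chunking'' of item copies trivially yields at most $K$ buckets each of load at most $t/K + t/K^5$. In $d$ dimensions, however, a chunk may become ``full'' along any of the $d$ coordinates, so a naive greedy could produce up to $\Order(dK)$ chunks; since $d$ is constant, this only inflates $K$ (and hence the base of the exponent in \cref{thm:submultiplicativity}) by a constant factor, which is absorbed into the hidden constants. The remaining components---the large items bound, the size bound via sub-multiplicativity, and the recursion analysis---are then direct analogs of the corresponding arguments in \cref{sec:high-dim}.
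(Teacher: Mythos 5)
Your algorithm matches the paper's \cref{alg:unbounded} exactly, and your large-items bound, sub-multiplicativity bound, and recursion analysis all align with the paper's \cref{lem:subset-sum-time} analogue. The gap lies entirely in how you justify the small-items containment $\SSS^*(X_S, t) \subseteq K \cdot \SSS^*(X_S, (1+\tfrac{1}{K}) \cdot \tfrac{t}{K})$.

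Your greedy vector-scheduling argument does not go through for $d \geq 2$, and the proposed repair (``inflate $K$ by a constant factor depending on $d$'') is circular. Concretely: a greedy or first-fit packing of jobs of $\ell_\infty$-norm $\leq t/K^5$ into bins of per-coordinate capacity $T$ can be forced to open roughly $dt/T$ bins (a bin closes when \emph{some} coordinate reaches $T$, and by pigeonhole a single coordinate closes only a $1/d$ fraction of bins, so the coordinate-total bound $\leq t$ yields only $c \lesssim dt/T$). To have at most $K'$ bins you therefore need $T \gtrsim dt/K'$. But your sub-multiplicativity step requires $B_i = (K'-1) \cdot Z_0 \subseteq \SSS^*(X, t)$, which forces $(K'-1)T \leq t$, i.e.\ $T \lesssim t/K'$. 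These two constraints are incompatible whenever $d \geq 2$, and setting $K' = dK$ while keeping the target at $(1+\tfrac{1}{K})\tfrac{t}{K}$ makes things worse: then $(K'-1)T \approx d t$, so $B_i$ lands in $\SSS^*(X, dt)$ rather than $\SSS^*(X, t)$, and the size bound you need is lost. No choice of $K'$ rescues the greedy packing.

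The paper sidesteps this by not constructing the partition at all. It sets $X_S' = \set{2^\ell x : 0 \leq \ell \leq \log t,\, x \in X_S}$ so that $\SSS^*(X_S, t) = \SSS(X_S', t)$, then invokes the \emph{randomized} small-items lemma (\cref{lem:small-items-high-dim}) on $X_S'$: since the failure probability $\exp(-K)\,|\SSS^*(X_S,t)|\,d$ is strictly less than $1$ for the chosen $K$, a good partition $X_S' = X_1 \sqcup \dots \sqcup X_K$ \emph{exists}, and each $\SSS(X_i, (1+\tfrac{1}{K})\tfrac{t}{K}) \subseteq \SSS^*(X_S, (1+\tfrac{1}{K})\tfrac{t}{K}) = Z_0$. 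The algorithm remains deterministic because it computes the full $K$-fold sumset $K \cdot Z_0$, which contains the sum across any partition; only the \emph{analysis} uses the probabilistic method. If you replace your greedy argument with this existential one, the rest of your proof is sound.
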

\begin{proof}
We summarize the pseudocode in \cref{alg:unbounded}. As is typical~\cite{KoiliarisX19,Bringmann17}, our algorithm for Unbounded Subset Sum is significantly simpler compared to Subset Sum as we can avoid color coding altogether (this also allows us to achieve a deterministic algorithm).

The idea is similar to before: We partition the given items $X$ into \emph{small items} $X_S = X \cap [0\ldots \frac{t}{K^5}]^d$ and \emph{large items} $X_L = X \setminus X_S$. That is, the large items are vectors which have at least one entry larger than $\frac{t}{K^5}$.

The first case concerns the small items. We recursively compute the unbounded subset sums~$Z_0$ of~$X_S$ with target $(1 + \frac{1}{K}) \cdot \frac{t}{K}$ (for some parameter $K \geq 1$), and then compute $Z$ as the $K$-fold sumset of $Z_0$ (by applying a deterministic sparse sumset algorithm $K-1$ times, see \cref{lem:sparse-conv}). Before we continue with the algorithm, we show that we thereby compute all unbounded sums up to the given target $t$:

\begin{claim} \label{thm:unbounded-high-dim:clm:small-items}
$Z = \SSS^*(X_S, t)$ (provided that $K > \log |\SSS^*(X_S, t)| + \log d$).
\end{claim}
\begin{proof}
It is clear that $Z \subseteq \SSS^*(X_S, t)$. Focus on the other direction, and consider the multiset $X_S' = \set{2^\ell \cdot x : 0 \leq \ell \leq \log t, x \in X_S}$; it is easy to verify that we can obtain each unbounded subset sum of $X_S$ as a \emph{bounded} subset sum of $X_S'$ (since whenever we take an item $x$, say $m$ times, we can replace these $m$ copies by $\sum_\ell m_\ell \cdot 2^\ell \cdot x$ where $m_i$ denotes the $i$-th bit in the binary representation of $m$), i.e., $\SSS^*(X_S, t) = \SSS(X_S', t)$. Applying \cref{lem:small-items-high-dim} to the set $X_S'$, we obtain that there exists a partition of $X_S' = X_1 \sqcup \dots \sqcup X_K$ such that
\begin{align*}
    \SSS(X_S', t) &= \sum_{i=1}^K \SSS(X_i, (1 + \tfrac{1}{K}) \cdot \tfrac{t}{K}),
\end{align*}
provided that $\exp(-K) \cdot |\SSS(X_S, t)| \cdot d < 1$, as is guaranteed by our choice of $K$. This yields
\begin{align*}
    \SSS(X_S', t) = \sum_{i=1}^K \SSS(X_i, (1 + \tfrac{1}{K}) \cdot \tfrac{t}{K}) &\subseteq \sum_{i=1}^K \SSS^*(X_S, (1 + \tfrac{1}{K}) \cdot \tfrac{t}{K}) = \sum_{i=1}^K Z_0 = Z. \qedhere
\end{align*}
\end{proof}

In light of \cref{thm:unbounded-high-dim:clm:small-items} we have successfully computed all unbounded subset sums involving small items. To deal with the large items, we now update $Z \gets (Z + X_L) \cap [0\ldots t]^d$ for a total of $K^5 d$ times, and then return $Z = \SSS^*(X, t)$. The idea is that each (unbounded) subset sum can involve at most~$K^5 d$ copies of a large item (since each large item has at least one coordinate with size $\geq \frac{t}{K^5}$). The correctness is immediate.

We finally analyze the running time. With each recursive call we decrease the target $t$ to at most $t/2$, and therefore the overhead due to the recursion is at most $\Order(\log t)$. The running time of computing the $K-1 = \Order(\log t)$ sparse sumsets by \cref{lem:sparse-conv} is $|\SSS^*(X, t)| \polylog(t)$. The dominant term is to compute the $K^5 d = \polylog(t)$ prefix-restricted sumsets by \cref{thm:prefix-sumset-high-dim} (for sets $A, C \subseteq \SSS^*(X, t)$ and $B \subseteq X_L$) each running in time~\smash{$|\SSS(X, t)| \cdot |X_L|^{1 - \frac{1}{d+1}} \cdot \polylog(t)$}.
\end{proof}

\bibliographystyle{plainurl}
\bibliography{paper}

\begin{thebibliography}{10}

\bibitem{AbboudBF23}
Amir Abboud, Karl Bringmann, and Nick Fischer.
\newblock Stronger {3-SUM} lower bounds for approximate distance oracles via additive combinatorics.
\newblock In Barna Saha and Rocco~A. Servedio, editors, {\em 55th Annual {ACM} Symposium on Theory of Computing ({STOC} 2023)}, pages 391--404. {ACM}, 2023.
\newblock \href {https://doi.org/10.1145/3564246.3585240} {\path{doi:10.1145/3564246.3585240}}.

\bibitem{AbboudBHS22}
Amir Abboud, Karl Bringmann, Danny Hermelin, and Dvir Shabtay.
\newblock {SETH}-based lower bounds for subset sum and bicriteria path.
\newblock {\em {ACM} Trans. Algorithms}, 18(1):6:1--6:22, 2022.
\newblock \href {https://doi.org/10.1145/3450524} {\path{doi:10.1145/3450524}}.

\bibitem{ArnoldR15}
Andrew Arnold and Daniel~S. Roche.
\newblock Output-sensitive algorithms for sumset and sparse polynomial multiplication.
\newblock In Kazuhiro Yokoyama, Steve Linton, and Daniel Robertz, editors, {\em 40th International Symposium on Symbolic and Algebraic Computation ({ISSAC 2015})}, pages 29--36. {ACM}, 2015.
\newblock \href {https://doi.org/10.1145/2755996.2756653} {\path{doi:10.1145/2755996.2756653}}.

\bibitem{AustrinKKM13}
Per Austrin, Petteri Kaski, Mikko Koivisto, and Jussi M{\"{a}}{\"{a}}tt{\"{a}}.
\newblock Space-time tradeoffs for {Subset} {Sum}: {An} improved worst case algorithm.
\newblock In Fedor~V. Fomin, Rusins Freivalds, Marta~Z. Kwiatkowska, and David Peleg, editors, {\em 40th International Colloquium on Automata, Languages, and Programming ({ICALP} 2013)}, volume 7965 of {\em Lecture Notes in Computer Science}, pages 45--56. Springer, 2013.
\newblock \href {https://doi.org/10.1007/978-3-642-39206-1\_5} {\path{doi:10.1007/978-3-642-39206-1\_5}}.

\bibitem{AustrinKKN15}
Per Austrin, Petteri Kaski, Mikko Koivisto, and Jesper Nederlof.
\newblock Subset sum in the absence of concentration.
\newblock In Ernst~W. Mayr and Nicolas Ollinger, editors, {\em 32nd Annual Symposium on Theoretical Aspects of Computer Science ({STACS} 2015)}, volume~30 of {\em LIPIcs}, pages 48--61. Schloss Dagstuhl - Leibniz-Zentrum f{\"{u}}r Informatik, 2015.
\newblock URL: \url{https://doi.org/10.4230/LIPIcs.STACS.2015.48}, \href {https://doi.org/10.4230/LIPICS.STACS.2015.48} {\path{doi:10.4230/LIPICS.STACS.2015.48}}.

\bibitem{AustrinKKN16}
Per Austrin, Petteri Kaski, Mikko Koivisto, and Jesper Nederlof.
\newblock Dense subset sum may be the hardest.
\newblock In Nicolas Ollinger and Heribert Vollmer, editors, {\em 33rd Annual Symposium on Theoretical Aspects of Computer Science ({STACS} 2016)}, volume~47 of {\em LIPIcs}, pages 13:1--13:14. Schloss Dagstuhl - Leibniz-Zentrum f{\"{u}}r Informatik, 2016.
\newblock URL: \url{https://doi.org/10.4230/LIPIcs.STACS.2016.13}, \href {https://doi.org/10.4230/LIPICS.STACS.2016.13} {\path{doi:10.4230/LIPICS.STACS.2016.13}}.

\bibitem{AxiotisBBJNTW21}
Kyriakos Axiotis, Arturs Backurs, Karl Bringmann, Ce~Jin, Vasileios Nakos, Christos Tzamos, and Hongxun Wu.
\newblock Fast and simple modular subset sum.
\newblock In Hung~Viet Le and Valerie King, editors, {\em 4th Symposium on Simplicity in Algorithms ({SOSA} 2021)}, pages 57--67. {SIAM}, 2021.
\newblock \href {https://doi.org/10.1137/1.9781611976496.6} {\path{doi:10.1137/1.9781611976496.6}}.

\bibitem{AxiotisBJTW19}
Kyriakos Axiotis, Arturs Backurs, Ce~Jin, Christos Tzamos, and Hongxun Wu.
\newblock Fast modular subset sum using linear sketching.
\newblock In Timothy~M. Chan, editor, {\em 30th Annual {ACM-SIAM} Symposium on Discrete Algorithms ({SODA} 2019)}, pages 58--69. {SIAM}, 2019.
\newblock \href {https://doi.org/10.1137/1.9781611975482.4} {\path{doi:10.1137/1.9781611975482.4}}.

\bibitem{BansalGNV18}
Nikhil Bansal, Shashwat Garg, Jesper Nederlof, and Nikhil Vyas.
\newblock Faster space-efficient algorithms for subset sum, k-sum, and related problems.
\newblock {\em {SIAM} J. Comput.}, 47(5):1755--1777, 2018.
\newblock \href {https://doi.org/10.1137/17M1158203} {\path{doi:10.1137/17M1158203}}.

\bibitem{Bellman57}
Richard~E. Bellman.
\newblock {\em Dynamic Programming}.
\newblock Princeton University Press, 1957.

\bibitem{Bringmann17}
Karl Bringmann.
\newblock A near-linear pseudopolynomial time algorithm for subset sum.
\newblock In Philip~N. Klein, editor, {\em 28th Annual {ACM-SIAM} Symposium on Discrete Algorithms ({SODA} 2017)}, pages 1073--1084. {SIAM}, 2017.
\newblock \href {https://doi.org/10.1137/1.9781611974782.69} {\path{doi:10.1137/1.9781611974782.69}}.

\bibitem{Bringmann24}
Karl Bringmann.
\newblock Knapsack with small items in near-quadratic time.
\newblock In Bojan Mohar, Igor Shinkar, and Ryan O'Donnell, editors, {\em 56th Annual {ACM} Symposium on Theory of Computing ({STOC} 2024)}, pages 259--270. {ACM}, 2024.
\newblock \href {https://doi.org/10.1145/3618260.3649719} {\path{doi:10.1145/3618260.3649719}}.

\bibitem{BringmannFN21}
Karl Bringmann, Nick Fischer, and Vasileios Nakos.
\newblock Sparse nonnegative convolution is equivalent to dense nonnegative convolution.
\newblock In Samir Khuller and Virginia~Vassilevska Williams, editors, {\em 53rd Annual {ACM} Symposium on Theory of Computing ({STOC} 2021)}, pages 1711--1724. {ACM}, 2021.
\newblock \href {https://doi.org/10.1145/3406325.3451090} {\path{doi:10.1145/3406325.3451090}}.

\bibitem{BringmannFN22}
Karl Bringmann, Nick Fischer, and Vasileios Nakos.
\newblock Deterministic and {Las Vegas} algorithms for sparse nonnegative convolution.
\newblock In Joseph~(Seffi) Naor and Niv Buchbinder, editors, {\em 33rd Annual {ACM-SIAM} Symposium on Discrete Algorithms ({SODA} 2022)}, pages 3069--3090. {SIAM}, 2022.
\newblock \href {https://doi.org/10.1137/1.9781611977073.119} {\path{doi:10.1137/1.9781611977073.119}}.

\bibitem{BringmannN20}
Karl Bringmann and Vasileios Nakos.
\newblock Top-$k$-convolution and the quest for near-linear output-sensitive subset sum.
\newblock In Konstantin Makarychev, Yury Makarychev, Madhur Tulsiani, Gautam Kamath, and Julia Chuzhoy, editors, {\em 52nd Annual {ACM} Symposium on Theory of Computing ({STOC} 2020)}, pages 982--995. {ACM}, 2020.
\newblock \href {https://doi.org/10.1145/3357713.3384308} {\path{doi:10.1145/3357713.3384308}}.

\bibitem{BringmannN21}
Karl Bringmann and Vasileios Nakos.
\newblock Fast $n$-fold boolean convolution via additive combinatorics.
\newblock In Nikhil Bansal, Emanuela Merelli, and James Worrell, editors, {\em 48th International Colloquium on Automata, Languages, and Programming ({ICALP} 2021)}, volume 198 of {\em LIPIcs}, pages 41:1--41:17. Schloss Dagstuhl - Leibniz-Zentrum f{\"{u}}r Informatik, 2021.
\newblock \href {https://doi.org/10.4230/LIPIcs.ICALP.2021.41} {\path{doi:10.4230/LIPIcs.ICALP.2021.41}}.

\bibitem{BringmannW21}
Karl Bringmann and Philip Wellnitz.
\newblock On near-linear-time algorithms for dense subset sum.
\newblock In D{\'{a}}niel Marx, editor, {\em 32nd Annual {ACM-SIAM} Symposium on Discrete Algorithms ({SODA} 2021)}, pages 1777--1796. {SIAM}, 2021.
\newblock \href {https://doi.org/10.1137/1.9781611976465.107} {\path{doi:10.1137/1.9781611976465.107}}.

\bibitem{CardinalI21}
Jean Cardinal and John Iacono.
\newblock Modular subset sum, dynamic strings, and zero-sum sets.
\newblock In Hung~Viet Le and Valerie King, editors, {\em 4th Symposium on Simplicity in Algorithms ({SOSA} 2021)}, pages 45--56. {SIAM}, 2021.
\newblock \href {https://doi.org/10.1137/1.9781611976496.5} {\path{doi:10.1137/1.9781611976496.5}}.

\bibitem{ChanL15}
Timothy~M. Chan and Moshe Lewenstein.
\newblock Clustered integer {3SUM} via additive combinatorics.
\newblock In Rocco~A. Servedio and Ronitt Rubinfeld, editors, {\em 47th Annual {ACM} Symposium on Theory of Computing ({STOC} 2015)}, pages 31--40. {ACM}, 2015.
\newblock \href {https://doi.org/10.1145/2746539.2746568} {\path{doi:10.1145/2746539.2746568}}.

\bibitem{ChenLMZ24c}
Lin Chen, Jiayi Lian, Yuchen Mao, and Guochuan Zhang.
\newblock Approximating partition in near-linear time.
\newblock In Bojan Mohar, Igor Shinkar, and Ryan O'Donnell, editors, {\em 56th Annual {ACM} Symposium on Theory of Computing ({STOC} 2024)}, pages 307--318. {ACM}, 2024.
\newblock \href {https://doi.org/10.1145/3618260.3649727} {\path{doi:10.1145/3618260.3649727}}.

\bibitem{ChenLMZ24a}
Lin Chen, Jiayi Lian, Yuchen Mao, and Guochuan Zhang.
\newblock Faster algorithms for bounded knapsack and bounded subset sum via fine-grained proximity results.
\newblock In David~P. Woodruff, editor, {\em 35th Annual {ACM-SIAM} Symposium on Discrete Algorithms ({SODA} 2024)}, pages 4828--4848. {SIAM}, 2024.
\newblock \href {https://doi.org/10.1137/1.9781611977912.171} {\path{doi:10.1137/1.9781611977912.171}}.

\bibitem{ChenLMZ24d}
Lin Chen, Jiayi Lian, Yuchen Mao, and Guochuan Zhang.
\newblock An improved pseudopolynomial time algorithm for subset sum.
\newblock {\em CoRR}, abs/2402.14493, 2024.
\newblock URL: \url{https://doi.org/10.48550/arXiv.2402.14493}, \href {https://doi.org/10.48550/ARXIV.2402.14493} {\path{doi:10.48550/ARXIV.2402.14493}}.

\bibitem{ChenLMZ24b}
Lin Chen, Jiayi Lian, Yuchen Mao, and Guochuan Zhang.
\newblock A nearly quadratic-time {FPTAS} for knapsack.
\newblock In Bojan Mohar, Igor Shinkar, and Ryan O'Donnell, editors, {\em 56th Annual {ACM} Symposium on Theory of Computing ({STOC} 2024)}, pages 283--294. {ACM}, 2024.
\newblock \href {https://doi.org/10.1145/3618260.3649730} {\path{doi:10.1145/3618260.3649730}}.

\bibitem{ColeH02}
Richard Cole and Ramesh Hariharan.
\newblock Verifying candidate matches in sparse and wildcard matching.
\newblock In John~H. Reif, editor, {\em 34th Annual {ACM} Symposium on Theory of Computing ({STOC} 2002)}, pages 592--601. {ACM}, 2002.
\newblock \href {https://doi.org/10.1145/509907.509992} {\path{doi:10.1145/509907.509992}}.

\bibitem{DinurDKS12}
Itai Dinur, Orr Dunkelman, Nathan Keller, and Adi Shamir.
\newblock Efficient dissection of composite problems, with applications to cryptanalysis, knapsacks, and combinatorial search problems.
\newblock In Reihaneh Safavi{-}Naini and Ran Canetti, editors, {\em 32nd Annual Cryptology Conference ({CRYPTO} 2012)}, volume 7417 of {\em Lecture Notes in Computer Science}, pages 719--740. Springer, 2012.
\newblock \href {https://doi.org/10.1007/978-3-642-32009-5\_42} {\path{doi:10.1007/978-3-642-32009-5\_42}}.

\bibitem{GiorgiGC20}
Pascal Giorgi, Bruno Grenet, and Armelle~Perret du~Cray.
\newblock Essentially optimal sparse polynomial multiplication.
\newblock In Ioannis~Z. Emiris and Lihong Zhi, editors, {\em 45th International Symposium on Symbolic and Algebraic Computation ({ISSAC 2020})}, pages 202--209. {ACM}, 2020.
\newblock \href {https://doi.org/10.1145/3373207.3404026} {\path{doi:10.1145/3373207.3404026}}.

\bibitem{GyarmatiHR07}
Katalin Gyarmati, Fran{\c{c}}ois Hennecart, and Imre~Z. Ruzsa.
\newblock Sums and differences of finite sets.
\newblock {\em Functiones et Approximatio Commentarii Mathematici}, 37(1):175--186, 2007.

\bibitem{GyarmatiMR10}
Katalin Gyarmati, M{\'{a}}t{\'{e}} Matolcsi, and Imre~Z. Ruzsa.
\newblock A superadditivity and submultiplicativity property for cardinalities of sumsets.
\newblock {\em Comb.}, 30(2):163--174, 2010.
\newblock URL: \url{https://doi.org/10.1007/s00493-010-2413-6}, \href {https://doi.org/10.1007/S00493-010-2413-6} {\path{doi:10.1007/S00493-010-2413-6}}.

\bibitem{HorowitzS74}
Ellis Horowitz and Sartaj Sahni.
\newblock Computing partitions with applications to the knapsack problem.
\newblock {\em J. {ACM}}, 21(2):277--292, 1974.
\newblock \href {https://doi.org/10.1145/321812.321823} {\path{doi:10.1145/321812.321823}}.

\bibitem{Howgrave-GrahamJ10}
Nick Howgrave{-}Graham and Antoine Joux.
\newblock New generic algorithms for hard knapsacks.
\newblock In Henri Gilbert, editor, {\em 29th Annual International Conference on the Theory and Applications of Cryptographic Techniques ({EUROCRYPT} 2010)}, volume 6110 of {\em Lecture Notes in Computer Science}, pages 235--256. Springer, 2010.
\newblock \href {https://doi.org/10.1007/978-3-642-13190-5\_12} {\path{doi:10.1007/978-3-642-13190-5\_12}}.

\bibitem{Jin24}
Ce~Jin.
\newblock 0-1 knapsack in nearly quadratic time.
\newblock In Bojan Mohar, Igor Shinkar, and Ryan O'Donnell, editors, {\em 56th Annual {ACM} Symposium on Theory of Computing ({STOC} 2024)}, pages 271--282. {ACM}, 2024.
\newblock \href {https://doi.org/10.1145/3618260.3649618} {\path{doi:10.1145/3618260.3649618}}.

\bibitem{JinW19}
Ce~Jin and Hongxun Wu.
\newblock A simple near-linear pseudopolynomial time randomized algorithm for subset sum.
\newblock In Jeremy~T. Fineman and Michael Mitzenmacher, editors, {\em 2nd Symposium on Simplicity in Algorithms ({SOSA} 2019)}, volume~69 of {\em OASIcs}, pages 17:1--17:6. Schloss Dagstuhl - Leibniz-Zentrum f{\"{u}}r Informatik, 2019.
\newblock \href {https://doi.org/10.4230/OASIcs.SOSA.2019.17} {\path{doi:10.4230/OASIcs.SOSA.2019.17}}.

\bibitem{JinX23}
Ce~Jin and Yinzhan Xu.
\newblock Removing additive structure in {3SUM}-based reductions.
\newblock In Barna Saha and Rocco~A. Servedio, editors, {\em 55th Annual {ACM} Symposium on Theory of Computing ({STOC} 2023)}, pages 405--418. {ACM}, 2023.
\newblock \href {https://doi.org/10.1145/3564246.3585157} {\path{doi:10.1145/3564246.3585157}}.

\bibitem{JinX24}
Ce~Jin and Yinzhan Xu.
\newblock Shaving logs via large sieve inequality: Faster algorithms for sparse convolution and more.
\newblock In Bojan Mohar, Igor Shinkar, and Ryan O'Donnell, editors, {\em 56th Annual {ACM} Symposium on Theory of Computing ({STOC} 2024)}, pages 1573--1584. {ACM}, 2024.
\newblock \href {https://doi.org/10.1145/3618260.3649605} {\path{doi:10.1145/3618260.3649605}}.

\bibitem{KoiliarisX18}
Konstantinos Koiliaris and Chao Xu.
\newblock Subset sum made simple.
\newblock {\em CoRR}, abs/1807.08248, 2018.
\newblock URL: \url{http://arxiv.org/abs/1807.08248}.

\bibitem{KoiliarisX19}
Konstantinos Koiliaris and Chao Xu.
\newblock Faster pseudopolynomial time algorithms for subset sum.
\newblock {\em {ACM} Trans. Algorithms}, 15(3):40:1--40:20, 2019.
\newblock \href {https://doi.org/10.1145/3329863} {\path{doi:10.1145/3329863}}.

\bibitem{Lev96}
Vsevolod~F. Lev.
\newblock Structure theorem for multiple addition and the {Frobenius} problem.
\newblock {\em Journal of Number Theory}, 58(1):79--88, 1996.

\bibitem{Nakos20}
Vasileios Nakos.
\newblock Nearly optimal sparse polynomial multiplication.
\newblock {\em {IEEE} Trans. Inf. Theory}, 66(11):7231--7236, 2020.
\newblock \href {https://doi.org/10.1109/TIT.2020.2989385} {\path{doi:10.1109/TIT.2020.2989385}}.

\bibitem{NathansonR99}
Melvyn~B. Nathanson and Imre~Z. Ruzsa.
\newblock Additive number theory: {Inverse} problems and the geometry of sumsets.
\newblock {\em Bulletin of the London Mathematical Society}, 31(148):108, 1999.

\bibitem{NederlofW21}
Jesper Nederlof and Karol Wegrzycki.
\newblock Improving {S}chroeppel and {S}hamir's algorithm for subset sum via orthogonal vectors.
\newblock In Samir Khuller and Virginia~Vassilevska Williams, editors, {\em 53rd Annual {ACM} Symposium on Theory of Computing ({STOC} 2021)}, pages 1670--1683. {ACM}, 2021.
\newblock \href {https://doi.org/10.1145/3406325.3451024} {\path{doi:10.1145/3406325.3451024}}.

\bibitem{Pisinger99}
David Pisinger.
\newblock Linear time algorithms for knapsack problems with bounded weights.
\newblock {\em J. Algorithms}, 33(1):1--14, 1999.
\newblock URL: \url{https://doi.org/10.1006/jagm.1999.1034}, \href {https://doi.org/10.1006/JAGM.1999.1034} {\path{doi:10.1006/JAGM.1999.1034}}.

\bibitem{Pisinger03}
David Pisinger.
\newblock Dynamic programming on the word {RAM}.
\newblock {\em Algorithmica}, 35(2):128--145, 2003.
\newblock URL: \url{https://doi.org/10.1007/s00453-002-0989-y}, \href {https://doi.org/10.1007/S00453-002-0989-Y} {\path{doi:10.1007/S00453-002-0989-Y}}.

\bibitem{PolakRW21}
Adam Polak, Lars Rohwedder, and Karol Wegrzycki.
\newblock Knapsack and subset sum with small items.
\newblock In Nikhil Bansal, Emanuela Merelli, and James Worrell, editors, {\em 48th International Colloquium on Automata, Languages, and Programming ({ICALP} 2021)}, volume 198 of {\em LIPIcs}, pages 106:1--106:19. Schloss Dagstuhl - Leibniz-Zentrum f{\"{u}}r Informatik, 2021.
\newblock URL: \url{https://doi.org/10.4230/LIPIcs.ICALP.2021.106}, \href {https://doi.org/10.4230/LIPICS.ICALP.2021.106} {\path{doi:10.4230/LIPICS.ICALP.2021.106}}.

\bibitem{Potepa21}
Krzysztof Potepa.
\newblock Faster deterministic modular subset sum.
\newblock In Petra Mutzel, Rasmus Pagh, and Grzegorz Herman, editors, {\em 29th Annual European Symposium on Algorithms, ({ESA} 2021)}, volume 204 of {\em LIPIcs}, pages 76:1--76:16. Schloss Dagstuhl - Leibniz-Zentrum f{\"{u}}r Informatik, 2021.
\newblock URL: \url{https://doi.org/10.4230/LIPIcs.ESA.2021.76}, \href {https://doi.org/10.4230/LIPICS.ESA.2021.76} {\path{doi:10.4230/LIPICS.ESA.2021.76}}.

\bibitem{Ruzsa07}
Imre~Z. Ruzsa.
\newblock Cardinality questions about sumsets.
\newblock {\em Additive combinatorics (Providence, RI, USA), CRM Proceedings and Lecture Notes, American Math. Soc.}, 43:195--205, 2007.

\bibitem{SchroeppelS81}
Richard Schroeppel and Adi Shamir.
\newblock A {$T = O(2^{n/2})$}, {$S = O(2^{n/4})$} algorithm for certain {NP}-complete problems.
\newblock {\em {SIAM} J. Comput.}, 10(3):456--464, 1981.
\newblock \href {https://doi.org/10.1137/0210033} {\path{doi:10.1137/0210033}}.

\bibitem{TaoV06}
Terence Tao and Van~H. Vu.
\newblock {\em Additive Combinatorics}.
\newblock Cambridge Studies in Advanced Mathematics. Cambridge University Press, 2006.
\newblock \href {https://doi.org/10.1017/CBO9780511755149} {\path{doi:10.1017/CBO9780511755149}}.

\end{thebibliography}

\end{document}